\newtheorem{lemma}{Lemma}
\newtheorem{theorem}[lemma]{Theorem}
\newtheorem{cor}[lemma]{Corollary}
\newtheorem{lemmaA}{Lemma}[section]
\newtheorem{corA}[lemmaA]{Corollary}
\newtheorem{theoremA}[lemmaA]{Theorem}
\theoremstyle{remark}
\newtheorem{rem}{Remark}
\newtheorem*{ack}{Acknowledgements}
\newcommand{\Pel}{P_{el,j}}
\newcommand{\Pelbar}{\overline{P}_{el,j}}
\newcommand{\PO}{P}
\newcommand{\PObar}{\overline{P}}
\newcommand{\Hel}{H_{el}}
\newcommand{\chiH}{\chi_{H_f\leq \rho_0}}
\newcommand{\G}[4]{{w_{#1,#2}^{(#3)}(#4,\mu)}}
\newcommand{\cF}[1][z]{\mathcal{F}_{\PO(\theta)}(H_g(\theta)-#1)}
\newcommand{\cA}{\mathcal{A}(\delta, \epsilon)}
\newcommand{\cO}{\mathcal{O}}
\newcommand{\cU}{\mathcal{U}}
\newcommand{\idf}{\mathbf{1}_f}
\newcommand{\id}{\mathbf{1}}
\newcommand{\idel}{\mathbf{1}_{el}}
\newcommand{\hilbert}{\mathcal{H}}
\newcommand{\fock}{\mathcal{F}}
\newcommand{\asym}{\mathcal{A}_N}
\newcommand{\sym}{\mathcal{S}_N}
\newcommand{\C}{\mathbb{C}}
\newcommand{\Z}{\mathbb{Z}}
\newcommand{\R}{\mathbb{R}}
\newcommand{\N}{\mathbb{N}}
\renewcommand{\Im}{{\rm Im}\,}
\renewcommand{\Re}{{\rm Re}\,}
\DeclareMathOperator{\NumRan}{NumRan}
 \DeclareMathOperator{\ran}{Ran}
\DeclareMathOperator{\dom}{Dom} \DeclareMathOperator{\dist}{dist}
\begin{document}
    \title{On the Lifetime of Quasi-Stationary States
    in Non-Relativistic QED}
    \author[D. Hasler]{David Hasler}
   \address{David Hasler\\ Department of Mathematics\\
    P. O. Box 400137\\ University of Virginia\\ Charlottesville, VA
    22904-4137\\United States of America }
    \email{iwh@virginia.edu}

     \author[I. Herbst]{Ira Herbst}
     \address{Ira Herbst\\ Department of Mathematics\\
    P. O. Box 400137\\ University of Virginia\\ Charlottesville, VA
    22904-4137\\United States of America}
    \email{dh8ud@cms.mail.virginia.edu}

      \author[M. Huber]{Matthias Huber}
        \address{Matthias Huber\\ Mathematisches Institut\\
        Ludwig-Maximilians-Uni\-ver\-si\-t\"{a}t M\"{u}n\-chen\\ Theresienstra{\ss}e 39\\ 80333
        M\"{u}nchen\\ Germany}
        \email{mhuber@math.lmu.de}

    \date{September 24, 2007}
    \subjclass[2000]{81V10;35J10 35Q40 81Q10}
    \keywords{Non-relativistic QED, Interaction with the Electromagnetic
    Field, Pauli-Fierz Model}

    \begin{abstract}
        We consider resonances in the Pauli-Fierz model of non-re\-la\-ti\-vis\-tic
        QED.  We use and slightly modify  the analysis
        developed by Bach, Fr\"{o}hlich and Sigal
        \cite{Bachetal1998Q,Bachetal1999S} to obtain an upper and \emph{lower} bound on the lifetime of quasi-stationary
        states.
    \end{abstract}
    \maketitle

    \section{Introduction and Main Result}\label{sec:1}
        Spectral properties of models of non-relativistic QED were
        investigated by Bach, Fr\"{o}hlich, Sigal and Soffer
        \cite{Bachetal1995,Bachetal1998Q,Bachetal1999S,Bachetal1999P}
        and by many others. Bach, Fr\"{o}hlich, and Sigal \cite{Bachetal1999S} proved,
        among other things, an upper bound on the lifetime of
        quasi-stationary states.

        We show an upper and  \emph{lower} bound on the
        lifetime of quasi-stationary states. We heavily rely  on the
        analysis developed in \cite{Bachetal1998Q,Bachetal1999S}, but choose
        a different contour of integration and make use of an
        additional cancellation of terms. Moreover, we neither
        require a non-degeneracy assumption nor a spectral
        cutoff. However, we do not provide time dependent estimates on the remainder
        term and there are no photons in our quasi-stationary state.
        Estimates similar to ours
        were obtained before by different authors for other models,
        see e.g. \cite{JaksicPillet1995,King1994}.

        In order to be self-contained, we give all the necessary
        definitions for the model considered. For details, we refer
        the reader to \cite{Bachetal1999S}.
        We consider an atom in interaction with the second quantized
        electromagnetic field. The Hilbert space of the system is
        given by
        $$\hilbert:=\hilbert_{el}\otimes\fock,$$
        where
        $$\hilbert_{el}:=\asym L^2[(\R^3\times \Z_2)]^N$$
        is the Hilbert space of $N$ electrons with spin,
        and where
        $$\fock:=\bigoplus_{N=0}^\infty \sym L^2[(\R^3\times
        \Z_2)]^N$$
        is the Fock space (with vacuum $\Omega$) of the quantized electromagnetic field,
        allowing two transverse polarizations of the photon. $\asym$
        and $\sym$ are the projections onto the subspaces of
        functions anti-symmetric and symmetric, respectively, under
        a  permutation of variables. Strictly speaking, we would
        have to take the physical units into account in the
        definition of these spaces. However, we refrain from doing
        so in order not to complicate the notation.
        The operator
         $$\Hel':=-\frac{\hbar^2}{2m}\Delta_{3N}
         +\frac{\mathfrak{e}^2}{4\pi\epsilon_0}
         \left[\sum_{j=1}^N\frac{-\mathcal{Z}}{|x_j|}+\sum_{1\leq i< j\leq N}\frac{1}{|x_i-x_j|}\right]$$
        describes the electrons, and the operator for the total system
        is
         \begin{multline*}H_g':=\frac{1}{2m}\sum_{j=1}^N:[\sigma_j\cdot (-i\hbar\nabla_{x_j}-\mathfrak{e}A'_{\kappa'}(x_j))]^2:
         +H_f'\\+\frac{\mathfrak{e}^2}{4\pi\epsilon_0}\left[\sum_{j=1}^N\frac{-\mathcal{Z}}{|x_j|}+\sum_{1\leq i< j\leq
         N}\frac{1}{|x_i-x_j|}\right],\end{multline*}
        where $-\mathfrak{e}\mathcal{Z}$ is the charge of the nucleus, $\mathfrak{e}<0$ the
        charge of the electron, $\mathfrak{c}$ the velocity of light, $\hbar$ is
         Planck's constant, $\epsilon_0$ is the permittivity of the vacuum, $m$ the mass of the
        electron, $\sigma_j$ is the Pauli matrix for the jth electron, and $:\cdots:$ denotes normal ordering.
        The kinetic energy of the photons is
        $$H_f':=\hbar\mathfrak{c} \sum_{\mu=1,2}\int\nolimits_{k\in \R^3} dk  |k|a'^*_\mu(k) a'_\mu(k),$$
        where the $a'^*_\mu(k)$ and $a'_\mu(k)$ are the usual
        creation and annihilation operators.
        The second quantized electromagnetic field is
        $A'_{\kappa'}(x):=A'_{\kappa'}(x)_++A'_{\kappa'}(x)_-$,
        where
        \begin{equation*}
            A'_{\kappa'}(x)_+:=\sum_{\mu=1,2}\int dk
            \kappa'(|k|)\sqrt{\frac{\hbar}{2\epsilon_0\mathfrak{c}|k|(2\pi)^3}}\varepsilon'_\mu(k)e^{-ik\cdot x}a'^*_\mu(k).
        \end{equation*}
        and
        \begin{equation*}
            A'_{\kappa'}(x)_-:=\sum_{\mu=1,2}\int dk
            \kappa'(|k|)\sqrt{\frac{\hbar}{2\epsilon_0\mathfrak{c}|k|(2\pi)^3}}\varepsilon'_\mu(k)e^{ik\cdot x}a'_\mu(k).
        \end{equation*}
        Here $\varepsilon'_\mu(k)$, $\mu=1,2$, are the
        polarization vectors of the photon, depending only on the direction of $k$.
        Let us note that we use SI units here; for details about
        these operators, we refer the reader to
        \cite{CohenTannoudjietal1992,CohenTannoudjietal2004}.
        We set $a_0:=\alpha^{-1}(\frac{\hbar}{ m\mathfrak{c}})$
        (Bohr radius), $\zeta:=\frac{a_0}{2}$ and
        $\xi^{-1}:=\frac{2\alpha}{a_0}$. Moreover, $\kappa'(r):=\kappa(r \xi)$ is a cutoff
        function depending on the fine structure constant $\alpha=\frac{\mathfrak{e}^2}{4\pi\epsilon_0\hbar \mathfrak{c}}$.
        $\kappa$ is a function, which is positive on $[0,\infty)$,
        satisfies $\kappa(r)\rightarrow 1$ as $r\rightarrow 0$, and has an analytic continuation to a cone around the positive real axis which
        is bounded and decays faster than any inverse polynomial, e.g.,
        $\kappa(r):=e^{-r^4}$.
        Following \cite{Bachetal1998Q}, we scale the operator with
        the transformation
         $x_j\rightarrow
        \zeta x_j$ and $k\rightarrow \xi^{-1} k$. We denote the
        corresponding unitary transformation by $U$.
        After this transformation the electron positions are measured in units of
        $\frac{1}{2}a_0$, photon
        wave vectors in units of $\frac{2\alpha}{a_0}$,
        and energies in units of $4{\rm Ry}$, where the Rydberg is ${\rm Ry}:=\frac{\alpha^2mc^2}{2}$.
        The creation an annihilation operators transform as
        $$U a'_\mu(k) U^{-1}=\xi^{3/2}a_\mu(\xi k),\quad
        U a'^*_\mu(k) U^{-1}=\xi^{3/2}a^*_\mu(\xi k). $$
        Moreover, we set
        $$\varepsilon_\mu(k):=\varepsilon'_\mu(\xi^{-1}k),\quad
        \mu=1,2.$$
        Accordingly, we obtain
        $$U H_g' U^{-1}=2\alpha^2(mc^2) H_g,$$
        with  $H_g:=H_0+W_g$ and
        $H_0:=\Hel\otimes \idf +\idel \otimes H_f$, where
        $$\Hel:=-\Delta_{3N}+\sum_{j=1}^N\frac{-\mathcal{Z}}{|x_j|}+\sum_{1\leq i< j\leq N}\frac{1}{|x_i-x_j|}.$$
        Here
        $$H_f:= \sum_{\mu=1,2}\int\nolimits_{k\in \R^3} dk  |k|a^*_\mu(k) a_\mu(k)$$
        and
        the
        interaction is given by
        \begin{multline*}
            W_g:=\sum_{j=1}^N\{2\alpha^{3/2} A_\kappa(\alpha
            x_j)\cdot(-i\nabla_{x_j})+\alpha^3 :A_\kappa^2(\alpha
            x_j):\\+\alpha^{5/2}\sigma_j\cdot (\nabla\times
            A_\kappa)(\alpha x_j)\},
        \end{multline*}
        where
        the second quantized electromagnetic field is
        $A_{\kappa}(x):=A_{\kappa}(x)_++A_{\kappa}(x)_-$
        with
        \begin{equation*}
            A_{\kappa}(x)_+:=\sum_{\mu=1,2}\int \frac{dk
            \kappa(|k|)}{\sqrt{4\pi^2|k|}}\varepsilon_\mu(k)e^{-ik\cdot x}a^*_\mu(k).
        \end{equation*}
        and
        \begin{equation*}
            A_{\kappa}(x)_-:=\sum_{\mu=1,2}\int \frac{dk
            \kappa(|k|)}{\sqrt{4\pi^2|k|}}\varepsilon_\mu(k)e^{ik\cdot x}a_\mu(k).
        \end{equation*}

        As in \cite{Bachetal1999S},  we set $g:=\alpha^{3/2}$. Henceforth, we let the coupling constant $g:=\alpha^{3/2}>0$ be the perturbation
        parameter.
        We assume that the spectrum of $\Hel$ has the structure
        $$\sigma(\Hel)=\{E_0, E_1, \ldots \}\cup[\Sigma,\infty),$$
        where $\Sigma:=\inf \sigma_{ess}(\Hel)$ and $E_0<E_1<\ldots$
        are (at least two) eigenvalues (possibly)
        accumulating at $\Sigma$.
        In the following, we will look at one (fixed) eigenvalue $E_j$ of $\Hel$  with $j\geq1$.
        For $0<\epsilon<1/3$ we set $\rho_0:=g^{2-2\epsilon}$, and $\cA:=[E_j-\delta/2,
        E_j+\delta/2]+i[-g^{2-\epsilon}, \infty)$, where
        $\delta:=\dist(E_j, \sigma(\Hel)\setminus\{E_j\})>0.$
        We define  the operators
        \begin{multline}\label{Eq:OpDilatDef}
        \Hel(\theta):=\cU_{el}(\theta)\Hel\cU_{el}(\theta)^{-1}, \,
        H_g(\theta):=\cU(\theta)H_g\cU(\theta)^{-1},\\W_g(\theta):=\cU(\theta)W_g\cU(\theta)^{-1} \end{multline}
         for real
        $\theta$,
        where $\cU(\theta)$ is the unitary group associated to the
        generator of dilations. It is defined in such a way that the space coordinates
        of the electrons are dilated as $x_j\mapsto e^{\theta} x_j$
        and the momentum coordinates of the photons as $k\mapsto e^{-\theta} k$.
        It can be shown \cite[Corollary 1.3, Corollary 1.4]{Bachetal1999S} that the operators defined
         in equation \eqref{Eq:OpDilatDef} are analytic families
         for $|\theta|\leq \theta_0$ for some $\theta_0>0$.
         We introduce the convention
        $\theta:=i\vartheta$ with $\vartheta>0$. Moreover,
        $\cU_{el}(\theta)$ is the above dilation acting on the
        electronic space only.

        We define (with $r>0$ small enough) $P_{el,i}(\theta):=-(2\pi i)^{-1}\int_{|E_i-z|=r}(\Hel(\theta)-z)^{-1}dz$ to be the projection onto the
        eigenspace corresponding to the eigenvalue $E_i$ of $\Hel(\theta)$ and set
        $\overline{P}_{el,i}(\theta):=1 -P_{el,i}(\theta)$. Furthermore, we define
        $\PO(\theta):=\Pel(\theta)\otimes \chiH$ and
        $\PObar(\theta):=1-\PO(\theta)$. We abbreviate
        $P_{el,i}:=P_{el,i}(0)$.

        Note that if we consider operators of
        the form $PAP$, where $A$ is a closed operator and
        $P$ a projection with $\dom A\subset \ran P$, then our notation does not
        distinguish between the operators $PAP$ and $PAP|_{\ran
        P}$. It will be clear from the context, how the
        symbol $PAP$ is to be understood.

        Following \cite{Bachetal1999S}, we make crucial use of the
        Feshbach operator
        \begin{multline}\label{Eq:Fesh}
            \cF:=\PO(\theta) (H_g(\theta)-z)\PO(\theta)\\-
            \PO(\theta) W_g(\theta) \PObar(\theta) [\PObar(\theta)(H_g(\theta)-z)\PObar(\theta)]^{-1}\PObar(\theta)
            W_g(\theta)\PO(\theta).
        \end{multline}
        For the convenience of the reader, we summarize its
        most
        important properties including its existence in Appendix A. For details, we refer
        the reader to \cite[Section IV]{Bachetal1998Q} and
        \cite{Bachetal1999S}.
         It was shown in \cite{Bachetal1998Q,Bachetal1999S} that the Feshbach operator can be
        approximated in a sense to be shown  using the  operators
        \begin{multline}\label{Eq:ZDef}
            \tilde Z_j^{od}(\alpha):=\lim_{\epsilon\downarrow 0}\sum_{\mu=1,2}\int\nolimits_{k\in \R^3} dk \Pel \G{0}{1}{0}{k}\\\times\Pelbar
              [\Pelbar \Hel-
            E_j+|k|-i\epsilon]^{-1}\Pelbar \G{1}{0}{0}{k}\Pel
        \end{multline}
        and
         \begin{equation}\label{Eq:ZDef1}
            \tilde Z_j^{d}(\alpha):=\sum_{\mu=1,2}\int\nolimits_{k\in \R^3} \frac{dk}{|k|} \Pel \G{0}{1}{0}{k}  \Pel
            \G{1}{0}{0}{k}\Pel.
        \end{equation}
        Here the coupling functions $\G{0}{1}{\theta}{k}$ and $\G{1}{0}{\theta}{k}$ will be needed later with $\theta\neq 0$.
        Denoting the momentum of the jth electron by $p_j$, they are
        \begin{equation}\label{Eq:CoupDef}
            \G{0}{1}{\theta}{k}:=\G{1}{0}{\bar\theta}{k}^*:=\sum_{j=1}^N\{2e^{-\theta} G^{(\theta)}_{x_j}(k,\mu)\cdot
            p_j +\sigma_j\cdot B^{(\theta)}_{x_j}(k,\mu)\},
        \end{equation}
        where
        \begin{equation}\label{Eq:CoupDef1}
            G^{(\theta)}_x(k,\mu):=\frac{e^{-\theta}
            \kappa(e^{-\theta}|k|)}{\sqrt{4\pi^2
            |k|}}e^{i\alpha k\cdot x}\epsilon_\mu(k)
        \end{equation}
        and
        \begin{equation}\label{Eq:CoupDef2}
            B^{(\theta)}_x(k,\mu):=\frac{\alpha e^{-2\theta} \kappa(e^{-\theta}|k|)}{i
            \sqrt{4\pi^2
            |k|}}e^{i\alpha k\cdot x}(k\times
            \epsilon_\mu(k)).
        \end{equation}
        We set
        \begin{multline}\label{Eq:ZDef2}
            \tilde Z(\alpha):=Z_j^{d}(\alpha)+Z_j^{od}(\alpha),\,\,\, \tilde Z(\alpha, \theta):=\cU_{el}(\theta)\tilde Z(\alpha)\cU_{el}(\theta)^{-1}
            ,\\ Z(\theta):=\tilde Z(0, \theta)\text{, and } Z:=\tilde Z(0, 0).
        \end{multline}
        We consider the Feshbach operator $\cF$ as an
                operator on $\ran \PO(\theta)$. Similarly, we consider $\tilde
                Z(\alpha):=Z_j^{d}(\alpha)+Z_j^{od}(\alpha)$ and  $\tilde Z(\alpha,
                \theta)$ as operators on $\ran \Pel$ and  $\ran
                \Pel(\theta)$ respectively.

        We are now able to formulate our main result. It will be
        proven in Section \ref{sec:2}.
        \begin{theorem}\label{Thm:TimeDecay} Let $0<\epsilon<1/3$
        and $g$ small enough.
            Let $\phi_1$ and $\phi_2$ be  normalized eigenvectors of $\Hel$ with
            eigenvalue $E_j$  and $\Phi_i:=\phi_i\otimes \Omega$.
            Assume moreover that
         the imaginary part $\Im Z:=\tfrac{1}{2i}(Z-Z^*)$ of $Z$ is strictly
            positive on $\ran P_{el,j}$.
            Then, in terms of a dimensionless time parameter $s\geq 0$, $$\langle \Phi_1,
            e^{-isH_g}\Phi_2\rangle=\langle \phi_1, e^{-is(E_j-g^2 Z)} \phi_2\rangle+b(g, s),$$
            where $|b(g, s)|\leq C g^\epsilon$ for some $C\geq 0$.
        \end{theorem}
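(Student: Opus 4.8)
The plan is to express $\langle\Phi_1,e^{-isH_g}\Phi_2\rangle$ as a contour integral of the resolvent, to pass to the complex-dilated operator $H_g(\theta)$ by analyticity, to push the contour into the region $\cA$ where the resonance becomes visible on the second Riemann sheet, and there to reduce to the inverse Feshbach operator $[\cF]^{-1}$, which by \cite{Bachetal1998Q,Bachetal1999S} is close, on $\ran\PO(\theta)$, to $[(E_j-z)-g^2Z(\theta)]^{-1}$. To set up, note that $\Phi_i=\phi_i\otimes\Omega$ is a dilation-analytic vector lying in $\ran\PO(\theta)$: since $\phi_i$ is an eigenvector of the dilation-analytic operator $\Hel$ for the eigenvalue $E_j$, the map $\theta\mapsto\cU_{el}(\theta)\phi_i$ extends analytically and stays an eigenvector of $\Hel(\theta)$ for the same eigenvalue, while $\cU(\theta)\Omega=\Omega$ and $\Omega\in\ran\chiH$; write $\Phi_i(\theta):=\cU(\theta)\Phi_i=(\cU_{el}(\theta)\phi_i)\otimes\Omega$.

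For $s>0$ (the case $s=0$ being immediate) the spectral theorem gives $\langle\Phi_1,e^{-isH_g}\Phi_2\rangle=\int_\R e^{-is\lambda}\,d\langle\Phi_1,E_{H_g}(\lambda)\Phi_2\rangle$, which I would split at $|\lambda-E_j|=\delta/2$. Away from $E_j$ a Fermi-golden-rule estimate bounds the spectral mass of $\Phi_i$ by $O(g^2)$, and since $|e^{-is\lambda}|=1$ that part of the integral is $O(g^2)$ uniformly in $s\ge0$. On $[E_j-\delta/2,E_j+\delta/2]$ I would express the spectral measure through the boundary values of the resolvent, using that $\langle\Phi_1,(H_g-z)^{-1}\Phi_2\rangle$ is holomorphic in the whole open lower half-plane while its upper boundary value continues analytically across $[E_j-\delta/2,E_j+\delta/2]$ --- by the analyticity of $H_g(\theta)$ and of $\Phi_i(\theta)$ --- to the second-sheet function $\tilde F(z)=\langle\Phi_1(\bar\theta),(H_g(\theta)-z)^{-1}\Phi_2(\theta)\rangle$, holomorphic except for the resonance eigenvalues $z_m=\lambda_m+o(g^2)$, where the $\lambda_m$ are the eigenvalues of $E_j-g^2Z$.

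I would then deform both boundary-value integrals down onto one contour $\gamma\subset\cA$ --- the point where the contour differs from \cite{Bachetal1999S} --- chosen to remain in $\{\Im z\le0\}$, so that $|e^{-isz}|\le1$ uniformly in $s\ge0$, and to run at distance of order $g^{2-\epsilon}$ from the resonances and from the ray $E_j+e^{-i\vartheta}[0,\infty)$ of dilated continuous spectrum. Deforming the lower boundary value picks up no singularity (the resolvent of $H_g$ is holomorphic below the real axis); deforming the upper boundary value crosses the resonances $z_m$ and so picks up their residues. Since $\Phi_i(\theta)\in\ran\PO(\theta)$, the reconstruction terms of the Feshbach formula (Appendix A) drop out, so $\tilde F(z)=\langle\Phi_1(\bar\theta),[\cF]^{-1}\Phi_2(\theta)\rangle$, and on $\ran\Pel(\theta)\otimes\Omega$ one has $\cF=(E_j-z)-g^2Z(\theta)+R(z)$ with $\|R(z)\|=o(g^2)$ --- of strictly higher order in $g$ --- uniformly on $\gamma$ by \cite{Bachetal1998Q,Bachetal1999S}; hence the residues at the $z_m$ reconstruct $\langle\phi_1,e^{-is(E_j-g^2Z)}\phi_2\rangle$ up to errors $O(g^\epsilon)$ uniformly in $s$, the exponent discrepancy $z_m-\lambda_m=o(g^2)$ being tamed by combining it with $|e^{-isz_m}|=e^{-s|\Im z_m|}$ and the hypothesis $\Im Z\ge c>0$ on $\ran\Pel$. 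The two remaining contour integrals combine into $\int_\gamma e^{-isz}\big[\tilde F(z)-\langle\Phi_1,(H_g-z)^{-1}\Phi_2\rangle\big]\,dz$, and here the \emph{additional cancellation} enters: on $\gamma$, far from the cut and from the resonances, both the second-sheet function and the first-sheet resolvent --- the latter analyzed by the Feshbach map for $H_g$ itself, now with regularisation of order $g^{2-\epsilon}\to0$ --- are well approximated by the \emph{same} explicit expression $\langle\phi_1,[(E_j-z)-g^2Z]^{-1}\phi_2\rangle$, so their difference is of higher order in $g$ times the size $\lesssim g^{-(2-\epsilon)}$ of that expression; integrated against $|e^{-isz}|\le1$ over $\gamma$ this leaves only a factor $\log(1/g)$ and the bound $O(g^\epsilon)$, uniformly in $s$. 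Collecting the three pieces --- the part away from $E_j$, the residue discrepancies, and this last integral --- gives $b(g,s)$ with $|b(g,s)|\le Cg^\epsilon$.

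I expect the main obstacle to be making all of these estimates uniform in $s\ge0$ at once: it forces $\gamma$ to lie in the closed lower half-plane and yet below the resonances, it requires the resonance positions and the associated spectral projections to be known with an explicit gain of a power of $g$ beyond the naive order $g^2$, and it requires the cancellation above --- the coincidence of the first- and second-sheet expansions on $\gamma$ --- to be exhibited cleanly, since each of the two expansions alone contributes a term of size $\log(1/g)$ to the integral over $\gamma$.
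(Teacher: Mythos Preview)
Your overall architecture is right, but the cancellation you describe is not the one that actually occurs, and the step where you try to realize it fails. You propose to approximate the first-sheet resolvent $\langle\Phi_1,(H_g-z)^{-1}\Phi_2\rangle$ on $\gamma$ by a Feshbach reduction of the \emph{undilated} $H_g$. On a contour at depth $\Im z\sim -g^{2-\epsilon}$ this does not work: the sandwiched resolvent $\||B(\rho_0)|^{1/2}\PObar(H_0-z)^{-1}\PObar|B(\rho_0)|^{1/2}\|$ is of order $g^{-\epsilon}$ (the spectrum of $H_0$ on $\ran\Pelbar\otimes\idf$ covers $E_j$), so combined with $\||B(\rho_0)|^{-1/2}W_g|B(\rho_0)|^{-1/2}\|=\cO(g^\epsilon)$ the Neumann series is only $\cO(1)$ term by term and does not converge. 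The way around this is to dilate the lower boundary value with $\bar\theta$, i.e.\ to use $f(\bar\theta,z)=\langle\psi_1(\theta),(H_g(\bar\theta)-z)^{-1}\psi_2(\bar\theta)\rangle$; but then, by the analogue of the approximation lemmas with $\theta\to\bar\theta$ (the $-i\epsilon$ in the definition of $Z$ becomes $+i\epsilon$), the Feshbach reduction gives $[(E_j-z)-g^2Z^*]^{-1}$, \emph{not} $[(E_j-z)-g^2Z]^{-1}$. So the two sheets are not approximated by the same expression, and your stated cancellation fails at order $g^2$.

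What the paper actually does is structurally different. It first splits $f(\theta,\lambda)=\tilde f(\theta,\lambda)+B(\theta,\lambda)$ via a resolvent identity, where $\tilde f$ involves only the scalar operator $Q_0^{(\theta)}$ on $\ran\Pel(\theta)$ and $B$ carries the full Feshbach remainder. The term $B(\theta,\lambda)-B(\bar\theta,\lambda)$ is estimated \emph{on the real axis} (splitting into $|\lambda-E_j|\lessgtr\rho_0/2$) and is $\cO(g^\epsilon)$; only $\tilde f$ is moved down to the contour. On the contour the cancellation is much weaker than you state: one expands $\tilde f(\theta,z)=(E_j-z)^{-1}\langle\phi_1(\bar\theta),\phi_2(\theta)\rangle+g^2(\cdots)$ and likewise for $\bar\theta$, and only the zeroth-order terms $(E_j-z)^{-1}\langle\phi_1,\phi_2\rangle$ coincide and cancel. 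The remaining order-$g^2$ pieces are then bounded individually by $Cg^2/((E_j-\Re z)^2+g^{2(2-\epsilon)})$ on $C_3$ and integrate to $\cO(g^\epsilon)$; no $\log(1/g)$ appears. The pole contribution is extracted not by locating the individual resonances $z_m$ but by a small rectangular contour $C_0$ around $E_j-g^2\NumRan Z$, on which one compares $[E_j-z-g^2Q_0^{(\theta)}(z)]^{-1}$ to $[E_j-z-g^2Z(\theta)]^{-1}$ directly; the Cauchy integral of the latter produces $\langle\phi_1,e^{-is(E_j-g^2Z)}\phi_2\rangle$ exactly.
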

        The theorem has the following immediate Corollary:
        \begin{cor}
            Under the assumptions of Theorem \ref{Thm:TimeDecay},  if $0<\tau:=g^2 s$ is kept fixed, if
            $\phi:=\phi_1=\phi_2$ are eigenvectors of $Z$
            with eigenvalue $\Gamma$, and if $\Phi:=\phi\otimes \Omega$,
            then
            $$\lim_{g\downarrow 0} |\langle \Phi,
            e^{-isH_g}\Phi\rangle|=e^{-\tau \Im \Gamma}.$$
        \end{cor}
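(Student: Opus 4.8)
The plan is to read the Corollary off directly from Theorem~\ref{Thm:TimeDecay}; no new analysis is needed. Fix $\tau>0$ and, for each $g$, set $s=s(g):=\tau/g^2$. Applying the theorem with $\phi_1=\phi_2=\phi$ gives
$$\langle\Phi,e^{-isH_g}\Phi\rangle=\langle\phi,e^{-is(E_j-g^2Z)}\phi\rangle+b(g,s),\qquad |b(g,s)|\le Cg^\epsilon,$$
with $C$ independent of $s$. This $s$-uniformity is the only feature that matters here, since along the scaling $s=\tau/g^2$ one has $s\to\infty$ as $g\downarrow0$.

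Next I would evaluate the main term. Since $\phi$ is an eigenvector of $Z$ with eigenvalue $\Gamma$, one has $Z^n\phi=\Gamma^n\phi$ for every $n$, so $e^{isg^2Z}\phi=e^{isg^2\Gamma}\phi$, and because $E_jI$ commutes with $Z$,
$$e^{-is(E_j-g^2Z)}\phi=e^{-isE_j}e^{isg^2Z}\phi=e^{-is(E_j-g^2\Gamma)}\phi.$$
As $\phi$ is normalized, $\langle\phi,e^{-is(E_j-g^2Z)}\phi\rangle=e^{-is(E_j-g^2\Gamma)}$, whose modulus is $e^{-sg^2\Im\Gamma}=e^{-\tau\Im\Gamma}$ by $\tau=g^2s$. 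Here I use that $E_j$, $Z$, $\phi$ and $\Gamma$ do not depend on $g$, and that the rapidly oscillating factor $e^{-i\tau E_j/g^2}$ drops out upon taking absolute values.

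Combining the two displays with the triangle inequality,
$$\bigl|\,|\langle\Phi,e^{-isH_g}\Phi\rangle|-e^{-\tau\Im\Gamma}\,\bigr|\le |b(g,s)|\le Cg^\epsilon,$$
and letting $g\downarrow0$ (with $\tau$ held fixed, i.e.\ $s=\tau/g^2$) gives $\lim_{g\downarrow0}|\langle\Phi,e^{-isH_g}\Phi\rangle|=e^{-\tau\Im\Gamma}$, which is the assertion. There is no genuine obstacle at this level: the substance lies entirely in Theorem~\ref{Thm:TimeDecay}. The single point requiring care is precisely that the remainder estimate in the theorem be uniform in $s$; a bound growing in $s$, say $|b(g,s)|\le Cg^\epsilon(1+s)$, would be useless in the regime $s=\tau/g^2$ relevant to the corollary, and securing the $s$-uniform bound is exactly where the work goes in the proof of the theorem.
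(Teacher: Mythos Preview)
Your argument is correct and is exactly the ``immediate'' derivation the paper has in mind (the paper gives no separate proof, calling it an immediate corollary). The only substantive ingredient is the $s$-uniformity of the bound $|b(g,s)|\le Cg^\epsilon$, which you correctly identify and which is precisely what Theorem~\ref{Thm:TimeDecay} supplies.
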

        We close the introductory section with the following
        remarks:
        \begin{rem}
            The theorem can be rewritten in terms of the
             original operators:
            Let $\phi'_1$ and $\phi'_2$ be  normalized eigenvectors of $\Hel'$ with
            eigenvalue $2\alpha^2mc^2E_j$  and $\Phi'_i:=\phi'_i\otimes \Omega$.
            Then $\langle \Phi'_1,
            e^{-it\hbar^{-1}H'_g}\Phi'_2\rangle=\langle \phi'_1, e^{-it\frac{2\alpha^2mc^2}{\hbar}(E_j-g^2 Z')} \phi'_2\rangle
            +\cO(\alpha ^{3\epsilon/2})=
\langle \phi_1, e^{-it\frac{2\alpha^2mc^2}{\hbar}(E_j-g^2 Z)}
\phi_2\rangle
             +\cO(\alpha ^{3\epsilon/2})$,   where $\phi_i\otimes \Omega = U [\phi'_i\otimes \Omega]$.

            Here
            \begin{multline}
                Z':=\frac{\hbar^2}{8\alpha^4m^3c^2}\Bigg [ \lim_{\epsilon\downarrow 0}\sum_{\mu=1,2}\int\nolimits_{k\in \R^3} dk
                \frac{\kappa '(|k|)^2}{4\pi^2|k|}
                \Pel'\epsilon_\mu(k)\cdot p'
                \Pelbar'\\\times
              [\Pelbar' \Hel'-
            2\alpha^2mc^2E_j+\hbar c|k|-i\epsilon]^{-1}\Pelbar'\epsilon_\mu(k)\cdot p' \Pel'\\
            + \sum_{\mu=1,2}\int\nolimits_{k\in \R^3} \frac{dk}{\hbar c |k|} \frac{ \kappa' (|k|)^2}{4\pi^2|k|}
             \Pel' \epsilon_\mu(k)\cdot p'  \Pel'
            \epsilon_\mu(k)\cdot p'\Pel'
             \Bigg ]
            \end{multline}
            $P_{el,i}'$ is the projection onto the eigenspace of
            $\Hel'$ belonging to the eigenvalue $2\alpha^2mc^2E_i$,
            $p'_j:=-i\hbar\nabla_{x_j}$ and $p':=\sum_{j=1}^np'_j$.
        \end{rem}

        \begin{rem}
            Note that the matrix $\tilde Z(\alpha)$
            depends on the fine structure constant $\alpha$, since the
            coupling functions defined in equations
            \eqref{Eq:CoupDef}, \eqref{Eq:CoupDef1}, and
            \eqref{Eq:CoupDef2} do. Thus, due to the exponential decay of
            the eigenfunctions of the electronic operator, $\tilde Z(\alpha)$ can
            be developed in a power series in $\alpha= g^{2/3}$. The zero
            order term corresponds to electric dipole
            (E1)
            transitions, the higher order terms to magnetic dipole
            transitions as well as to higher order electric and
            magnetic transitions. We have for some $C>0$
            \begin{equation}\label{Eq:ZAlphaZero}
                g^2\|\tilde Z(\alpha, \theta)-Z(\theta)\|\leq C g^{2+2/3}.
            \end{equation}
            It is easy to see that the imaginary part of
            $Z$ is (see also \cite[Formula (IV.19)]{Bachetal1998Q})
            \begin{multline}
                \Im Z=\pi
                \sum_{i=0}^{j-1}\sum_{\mu=1,2}\int\nolimits_{|\omega|=1}d\omega
                (E_i-E_j)^2\\\times \frac{4
                \kappa(E_j-E_i)^2}{4\pi^2
                (E_j-E_i)}\Pel[\epsilon_\mu(\omega)\cdot p]  P_{el,i}[\epsilon_\mu(\omega)\cdot p]
                \Pel=\\
                \frac{8}{3}
                \sum_{i=0}^{j-1}
                (E_j-E_i) \kappa(E_j-E_i)^2
            \Pel p  P_{el,i} p
            \Pel,
            \end{multline}
            In the last step we used the relationships $\sum_{\mu=1,2}
            (\epsilon_\mu(\omega))_{m}(\epsilon_\mu(\omega))_{n}=\delta_{m,n}-\omega_m
            \omega_n$ and $\int d\omega \omega_m
            \omega_n=\frac{4\pi\delta_{m,n}}{3}$, where $\delta_{m,n}$
            is the Kronecker symbol.
             Moreover, $p:=\sum_{j=1}^Np_j$ and the expression $\Pel p  P_{el,i} p
            \Pel$ indicates a Euclidean inner product. Analogously,
            we set $x:=\sum_{j=1}^Nx_j$.
             Using the commutation relation
            $$[x,\Hel]=2ip$$
            we find
            \begin{equation}\label{Eq:ImZX}
                \Im Z=
                \frac{2}{3}
                \sum_{i=0}^{j-1}
                (E_j-E_i)^3 \kappa(E_j-E_i)^2
            \Pel x  P_{el,i} x
            \Pel.
            \end{equation}
            We analyze equation \eqref{Eq:ImZX} for the case of
            a hydrogen atom in Appendix B. We show there that
            $\Im Z$ is indeed strictly positive unless $j=1$. If
            $j=1$, $\Im Z$ has a zero eigenvalue, since
              the $2s$ state of hydrogen cannot decay
            via electric dipole transitions. However, the $2p$ states
             can decay via an electric dipole transition. It
            would be interesting to prove time decay estimates also
            in the latter case.

         Note that the transition rate is
            proportional to $g^2\alpha^2\propto \alpha^5$, in accordance
            with physics textbooks (see e.g.
            \cite[Section
            59]{BetheSalpeter1957Q}).

        \end{rem}
         \begin{rem}\label{Rem:ZSim}
                The eigenvectors of $\Hel$ are analytic vectors for the
                generator of dilations, and therefore
                $\cU_{el}(\theta):\ker(\Hel(0)-E_j)\rightarrow \ker(\Hel(\theta)-E_j)$ is a (bounded and bounded
                invertible) mapping between finite dimensional vector
                spaces. (The latter is true for $|\Im \theta|<\pi/2$.) This implies that the matrices $Z(0)$ and
                $Z(\theta)$ are similar.  In particular,
                the
                bounded operators $[-g^2Z(0)\otimes \idf
                +e^{-\theta}\idel\otimes H_f]|_{\ran \PO(0)}$ and $[-g^2Z(\theta)\otimes \idf
                +e^{-\theta}\idel\otimes H_f]|_{\ran \PO(\theta)}$ are
                similar (cf. \cite[Section 3]{Bachetal1999S}).
                This fact will be used in the proof of Theorem \ref{Thm:TimeDecay} and in Section \ref{sec:3}.
        \end{rem}

    \section{Proof of the Main Result}\label{sec:2}
        In this section we prove our main result. The technical
        estimates needed in the proof are collected in a series of lemmas and deferred to Section
        \ref{sec:3}.
        For the proof we need the operator (see \cite[Formula
        (IV.67)]{Bachetal1998Q})
        \begin{multline}
            Q^{(\theta)}(z):=\sum_{\mu=1,2}\int\nolimits_{k\in \R^3} dk \PO(\theta) [\G{0}{1}{\theta}{k}\otimes \idf]\\\times  \left
            [\frac{\PObar(\theta)(|k|)}{\Hel(\theta)+e^{-i\vartheta}(H_f+|k|)-z}\right]
            [\G{1}{0}{\theta}{k}\otimes \idf] \PO(\theta),
        \end{multline}
        defined on $\ran \PO(\theta)$ and for $z \in \cA$. Here we used the definition
        $\overline{P}(\theta)(|k|):=\overline{P}_{el,j}(\theta)\otimes
        \id_f +P_{el,j}(\theta)\otimes \chi_{H_f+|k|\geq
        \rho_0}$.
        Moreover, we need the the operator $Q^{(\theta)}_0(z)$ on
        $\ran \Pel(\theta)$, defined by
        $[Q^{(\theta)}_0(z)\phi]\otimes\Omega:=Q^{(\theta)}(z)[\phi\otimes\Omega]$
        for all $\phi \in \ran \Pel(\theta)$.
        It is defined by the formula
        \begin{multline}\label{Eq:Q0Def}
            Q^{(\theta)}_0(z)\\=\sum_{\mu=1,2}\int\nolimits_{k\in \R^3} dk\chi_{|k|\geq \rho_0}  \Pel(\theta) \G{0}{1}{\theta}{k} \left
            [\frac{\Pel(\theta)}{E_j+e^{-i\vartheta}|k|-z}\right]
            \G{1}{0}{\theta}{k}\ \Pel(\theta)\\+
            \sum_{\mu=1,2}\int\nolimits_{k\in \R^3} dk \Pel(\theta) \G{0}{1}{\theta}{k} \left
            [\frac{\Pelbar(\theta)}{\Hel(\theta)+e^{-i\vartheta}|k|-z}\right]
            \G{1}{0}{\theta}{k}\ \Pel(\theta).
        \end{multline}
        We remark that both operators are analytic for $z\in
        \cA$. This follows from the fact that the resolvents in
        their definitions can be bounded uniformly in $z\in \cA$.
        (See the proof of Lemma \ref{Lm:QBound} for a proof in the
        case of $Q^{(\theta)}_0(z)$. The proof for
        $Q^{(\theta)}(z)$ is similar and uses additionally the
        spectral theorem for $H_f$.)

        Note that by assumption there
        exists a constant $c>0$ such that $\Im Z\geq c.$
        Since $Z$ is bounded, there are constants $a,b>0$ such that
         $\NumRan Z$ is localized as
        $\NumRan Z \subset A(c,a,b),$
        where $A(c,a,b):=ic+[-a,a]+i[0,b]$ (see Figure \ref{Fig:NumRanZ}). We set
        $\nu:=\min\{\vartheta, \arctan(c/(2a))\}$.

        Finally for $w \in \C$ and $r>0$ we define $D(w,r):=\{z\in \C|
        |z-w|<r\}$, and for $A\subset\C$ we set $D(A,r):=\{z\in \C|
        \dist(z,A)<r\}$.
        The notation $[z,w]$ denotes either the line segment between
        $z\in \C$ and $w\in \C$ or a linear contour from $z\in \C$ to $w\in
        \C$. Accordingly, $[z_1,w_1]+[z_2,w_2]$ is to be understood either as the
         sum of the sets $[z_1,w_1]\subset \C$ and $[z_2,w_2]\subset
        \C$ or as a generalized contour.

        \begin{proof}[Proof of Theorem \ref{Thm:TimeDecay}]
            First, we show that we can introduce a spectral cutoff
            with an error of $\cO(g)$:
            We choose a function
            $F\in C_0^\infty((E_j-\delta/2, E_j+\delta/2))$ with $0\leq
            F(x)\leq 1$ for all $x\in [E_j-\delta/2, E_j+\delta/2]$
            and
            $F(x)=1$ for all $x\in [E_j-\delta/4, E_j+\delta/4]$.
            By the almost analytic functional calculus
           \cite{HelfferSjostrand1989,Davies1995}
            \begin{align*}
                F(H_g)&=\frac{1}{\pi}\int dxdy \frac{\tilde F(z)}{\bar \partial
                z}(H_g-z)^{-1}\\
                &= \frac{1}{\pi}\int dxdy \frac{\tilde F(z)}{\bar \partial
                z}(H_0-z)^{-1}-\frac{1}{\pi}\int dxdy \frac{\tilde F(z)}{\bar \partial
                z}(H_g-z)^{-1}W_g (H_0-z)^{-1}
            \end{align*}
            where $\tilde F\in C_0^\infty(\C)$ is an almost analytic extension of
            $F(x)$ with $|\tfrac{\tilde F(z)}{\partial  \bar z}|= \cO(|\Im z|^2)$.
            Since
            \begin{equation*}
                \frac{1}{\pi}\int dxdy \frac{\tilde F(z)}{ \partial\bar
                z}\langle e^{isH_g}\Phi_1
                ,(H_0-z)^{-1}\Phi_2\rangle=\langle e^{isH_g}\Phi_1
                ,\Phi_2\rangle
            \end{equation*}
            and
            \begin{multline*}
                |\frac{1}{\pi}\int dxdy \langle e^{isH_g}\Phi_1
                ,\frac{\tilde F(z)}{\partial\bar
                z}(H_g-z)^{-1}W_g(H_0-z)^{-1} \Phi_2\rangle|\\\leq
                \frac{1}{\pi}\int dxdy \|\Phi_1\|
                |\frac{\tilde F(z)}{\bar \partial
                z}|\|(H_g-z)^{-1}\| |(E_j-z)^{-1}| \|W_g\Phi_2\|\leq
                C g,
            \end{multline*}
            we find that
            $\langle \Phi_1,
            e^{-isH_g} \Phi_2\rangle=\langle \Phi_1,
            e^{-isH_g}F(H_g)\Phi_2\rangle+\cO(g).$

            Analogous to \cite{Hunziker1990} and \cite{Bachetal1999S}, we can write
            \begin{multline*}
                \langle \Phi_1, e^{-isH_g}F(H_g)\Phi_2\rangle
                =-\frac{1}{2\pi
                i}\lim_{\epsilon\downarrow 0}\int
                d\lambda e^{-i\lambda s}F(\lambda)
                [f(0, \lambda-i\epsilon)-  f(0, \lambda+i\epsilon)]
                \\=-\frac{1}{2\pi
                i}\int
                d\lambda e^{-i\lambda s}F(\lambda)
                [f(\overline{\theta}, \lambda)-  f(\theta, \lambda)],
            \end{multline*}
            where $f(\theta, \lambda):=\langle \psi_1(\overline{\theta}),
            \frac{1}{H_g(\theta)-\lambda}\psi_2(\theta)\rangle$
            with $\psi_i(\theta):=\phi_i(\theta)\otimes \Omega$ and
            $\phi_i(\theta):=\cU_{el}(\theta)\phi_i$. We used Stone's
            theorem in the first step. In the second step we used the analyticity of $H_g(\theta)$ and
            the fact that $H_g(\theta)$ has no spectrum in the interval $[E_j-\delta/2,E_j+\delta/2]$
             (see \cite[Theorem 3.2]{Bachetal1999S} and also Corollary \ref{Cor:OpInv} below).

            Noting that $\langle \psi_1(\overline{\theta}),
            \frac{1}{H_g(\theta)-\lambda}\psi_2(\theta)\rangle=\langle\psi_1(\overline{\theta}),
            \cF[\lambda]^{-1}\psi_2(\theta)\rangle$ (see \cite[Formula (IV).14]{Bachetal1998Q} and also Lemma \ref{Lm:FeshExist}) and using the resolvent equation, we
            obtain
            \begin{multline*}
                f(\theta, \lambda)
                =\langle\psi_1(\overline{\theta}), \cF[\lambda]^{-1}\psi_2(\theta)\rangle\\
                =\langle\phi_1(\overline{\theta}), [E_j-\lambda-g^2Q_0^{(\theta)}(\lambda)]^{-1} \phi_2(\theta)\rangle
                -\langle\psi_1(\overline{\theta}),[E_j-\lambda-g^2Q_0^{(\theta)}(\lambda)]^{-1}\otimes \idf\\\times
                [\cF[\lambda]
                -(E_j-\lambda+e^{-\theta}\idel\otimes H_f-g^2
                Q^{(\theta)}(\lambda))\PO(\theta)]\\\times
                \cF[\lambda]^{-1}\psi_2(\theta)\rangle
                =:\tilde{f}(\theta, \lambda)+B(\theta,\lambda),
            \end{multline*}
            where $\tilde{f}(\theta, \lambda)$ is the first term in the
            sum.
            The strategy is now to move the contour for the first
            term in order to pick up a pole contribution (see Figure \ref{Fig:Contour}), and to
            estimate the second term on the real axis:
            \begin{eqnarray*}
            \lefteqn{\int
                d\lambda e^{-i\lambda s}F(\lambda)
                [f(\overline{\theta}, \lambda)-  f(\theta,
                \lambda)]}
                \\
                &=&\int
                d\lambda e^{-i\lambda s}F(\lambda)
                [B(\overline{\theta}, \lambda)-  B(\theta,
                \lambda)]
                +\int_{C_1+C_5} dz e^{-i z s}F(z)
                [\tilde f(\overline{\theta}, z)-  \tilde f(\theta,
                z)]\\
                &+&\int_{C_2+C_3+C_4} dz e^{-i z s}
                [\tilde f(\overline{\theta}, z)-  \tilde f(\theta,
                z)] - \int_{C_0} dz e^{-i z s}
                [\tilde f(\overline{\theta}, z)-  \tilde f(\theta,
                z)],
            \end{eqnarray*}
            where  we set $C:=C_1+C_2+C_3+C_4+C_5$, with
            $C_1:=[E_j-\delta/2, E_j-\delta/4]$, $C_2:=[E_j-\delta/4,
            E_j-\delta/4-ig^{2-\epsilon}/2]$,
            $C_3:=[E_j-\delta/4-ig^{2-\epsilon}/2,
            E_j+\delta/4-ig^{2-\epsilon}/2]$, $C_4:=[
            E_j+\delta/4-ig^{2-\epsilon}/2, E_j+\delta/4]$ and $C_5:=[E_j+\delta/4,
            E_j+\delta/2]$. $C_0$ is a suitable contour to pick up the pole contribution from $\tilde f(\theta,
                z)$. The analyticity properties required for this process will be discussed below.
                 Note that the contour $C$ cannot simply
                be moved down much further, since $Q_0^{(\theta)}(z)$
                may have singularities outside of $\cA$.
             \begin{figure}[h]
            \centering
            \includegraphics[width=\textwidth,clip=true]{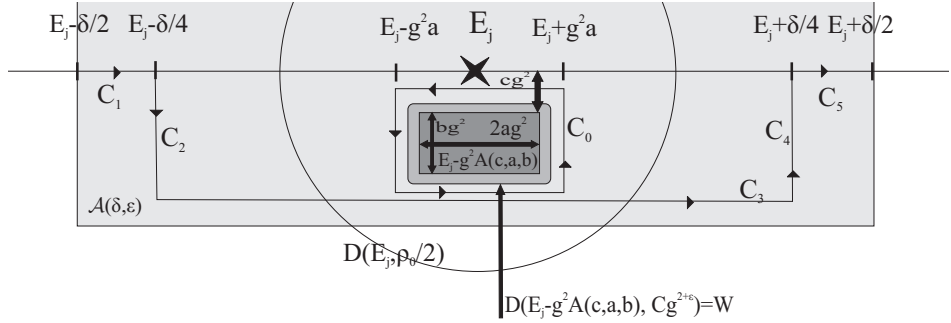}
            \caption{The integration contour}
            \label{Fig:Contour}
            \end{figure}

            \emph{Estimates on the real axis:} We divide the
            integration interval $[E_j-\delta/2, E_j+\delta/2]$ into two
            parts: On $[E_j-\delta/2, E_j+\delta/2]\setminus (E_j-\rho_0/2,
            E_j+\rho_0/2)$ we use Lemma \ref{Lm:QAppr} and Lemma \ref{Lm:NumRanContr} to
            obtain
            $
                |B(\theta,\lambda)|\leq C\cdot
                \frac{g^{2+\epsilon}}{(\sin\vartheta)^2(|\lambda-E_j|-Cg^2)^2}.
            $
            Since
            \begin{multline*}
                (\sin\vartheta)^{-2}\int_{\frac{g^{2-2\epsilon}}{2}}^\infty
                \frac{d\lambda}{(\lambda-Cg^2)^2}=
                (\sin\vartheta)^{-2} g^{-2}\int
                _{\frac{g^{-2\epsilon}}{2}}^\infty\frac{d\lambda}{(\lambda-C)^2}\\=
                (\sin\vartheta)^{-2} g^{-2}\frac{1}{g^{-2\epsilon}/2-C}=\cO(\vartheta^{-2} g^{-2+2\epsilon}),
            \end{multline*}
            we see that the error term for this region is of the
            order $\vartheta^{-2}g^{3\epsilon}$.
            On $(E_j-\rho_0/2, E_j+\rho_0/2)$ we estimate using
            Lemma \ref{Lm:QAppr} and
            Lemma \ref{Lm:FeshEst}:
            $|B(\theta,\lambda)|\leq C\vartheta^{-2}\cdot
            \frac{g^{2+\epsilon}}{(E_j-\lambda)^2+c^2g^4}$. Since
            $\int d\lambda
            \frac{g^{2+\epsilon}}{(E_j-\lambda)^2+c^2g^4}$ is easily
            seen to be of order $g^\epsilon$, and the same estimates hold for $B(\bar\theta,\lambda)$, the estimate on
            the real axis is proven.

            \emph{Estimates on the contour $C$:} We  estimate
            the integral $\int_C |e^{-isz}| |\tilde f(\overline{\theta}, z)- \tilde f(\theta,
            z)||dz|.$
            Note that
            \begin{multline*}\tilde f(\theta,
            z)\\=\frac{1}{E_j-z}\langle \phi_1(\overline\theta),\phi_2(\theta)\rangle+g^2\langle \phi_1(\overline\theta),
            \frac{1}{E_j-z} Q_0^{(\theta)}(z) \frac{1}{E_j-z-g^2
            Q_0^{(\theta)}(z)} \phi_2(\theta)\rangle.\end{multline*} Thus, the
            zero
            order terms of $f(\theta,
            z)$ and $f(\overline{\theta},
            z)$ cancel each other, and it suffices to show that the
            higher order terms are at least of order $g^{\epsilon}$.
            Since  $Q_0^{(\theta)}(z)$ is uniformly
            bounded in $z\in\cA$ by Lemma \ref{Lm:QBound}, we  estimate using Corollary
            \ref{Cor:NumRanQ0} (see Figure \ref{Fig:NumRanLoc})
            \begin{multline*}
                g^2|\langle\phi_1(\overline{\theta}),
                \frac{1}{E_j-(\lambda-ig^{2-\epsilon})} Q_0^{(\theta)}(\lambda-ig^{2-\epsilon})\\\times \frac{1}{E_j-(\lambda-ig^{2-\epsilon})-g^2
                Q_0^{(\theta)}(\lambda-ig^{2-\epsilon})} \phi_2(\theta)\rangle| \leq C\cdot
                \frac{g^2}{(E_j-\lambda)^2+(g^{2-\epsilon})^2}.
            \end{multline*}
            Thus the
            integral along $C_3$ of the above expression is easily
            seen to be of order $g^\epsilon$. The integral over the
            remaining contour is of order $g^2$, since
            $\dist(z, E_j)$ can be estimated independently of $g$ along this part of
            the
            contour. The integral of $\tilde f(\overline{\theta},
            z)$ can be estimated in the same way.

            \emph{Estimates on the pole term:} Since
            $Q_0^{(\theta)}(z)$ is uniformly bounded for $z\in \cA$ by Lemma \ref{Lm:QBound},
            the function $ \tilde f(\theta, z)$ has no poles in $\cA\setminus D(E_j,
            \rho_0/2)$. It follows by Lemma \ref{Lm:FeshEst} that
            $E_j-z-g^2Q_0^{(\theta)}(z)$ is bounded invertible if $z\in  D(E_j,
            \rho_0/2)\setminus (E_j-g^2A(c,a,b)+D(0, C_1\cdot
            g^{2+\epsilon}))\subset D(E_j,
            \rho_0/2)\setminus [
           \NumRan
            (E_j-g^2Z(0))+D(0, C_1\cdot g^{2+\epsilon})]
            $, i.e., all poles of $ \tilde f(\theta, z)$ are in the
            set $W:=E_j-g^2A(c,a,b)+D(0, C_1\cdot
            g^{2+\epsilon})$.

            Moreover, by Lemma \ref{Lm:FeshEst} we have
             the estimate
            $\|(E_j-z-g^2Q_0^{(\theta)}(z))^{-1}\| \leq\linebreak C \dist(z, \NumRan(E_j-g^2Z))^{-1}$
             for some $C>0$ if $z\in D(E_j, \rho_0/2)\setminus W$.
             In order
            to estimate the pole terms, we choose a contour $C_0$ around
            $W$ such that the length of the contour and its distance to $W$ are of
            order $g^2$. A possible choice is
            $C_0=[E_j+g^2(-(a+c/2)-ic/2),E_j+g^2(+(a+c/2)-ic/2)]+[E_j+g^2((a+c/2)-ic/2),E_j+g^2(+(a+c/2)-i(b+3c/2))]+
            [E_j+g^2(+(a+c/2)-i(b+3c/2)),E_j+g^2(-(a+c/2)-i(b+3c/2))]+[E_j+g^2(-(a+c/2)-i(b+3c/2)),E_j+g^2(-(a+c/2)-i(c/2))]$.
            We now use the expansion
            \begin{multline*}   \langle \phi_1(\overline{\theta}),
            (E_j-z-g^2Q_0^{(\theta)}(z))^{-1}\phi_2(\theta)\rangle=
            \langle \phi_1(\overline{\theta}),
            (E_j-z-g^2Z(\theta))^{-1}\phi_2(\theta)\rangle\\+g^2\langle \phi_1(\overline{\theta}),
             (E_j-z-g^2Q_0^{(\theta)}(z))^{-1}(Q_0^{(\theta)}(z)-Z(\theta))(E_j-z-g^2Z(\theta))^{-1}\phi_2(\theta)\rangle.\end{multline*}
             The integral over the first term gives the claimed
             leading term, the second term is of order $g^\epsilon$
             by Corollary \ref{Cor:ZAppr} and Lemma \ref{Lm:FeshEst}.

             Since by the above considerations (with $\theta$ replaced by $\bar
             \theta$)
              the function $\tilde f(\overline{\theta}, z)$ has
             no poles in the lower half-plane, there is no pole
             contribution from this function (see also Remark \ref{Rem:ZConj}).
        \end{proof}

    \section{Technical Lemmas}\label{sec:3}

        As in \cite{Bachetal1999S}, we need estimates on the
        numerical range and on the norm of the inverse of
        various operators. We make use of numerous results shown by
        Bach, Fr\"{o}hlich, and Sigal \cite{Bachetal1999S}, which are
        summarized in Appendix A.
        We use the following definitions from \cite{Bachetal1999S}:
        For $\eta>0$
        such that $E_j+\delta/2< \Sigma-\eta$ we define
        $P_{disc}(\theta):=\sum_{i: E_i\leq \Sigma-\eta}
        P_i(\theta)$ and
        $\overline{P}_{disc}(\theta):=1-P_{disc}(\theta)$.

        Since the operator valued function $Q^{(\theta)}_0$ is
        relevant for the location of the pole term in the time decay
        estimates, we need certain properties:

        \begin{lemma}\label{Lm:QBound} Let $\vartheta$ sufficiently
        small and
        $g^{\epsilon}/sin\vartheta\leq 1/2$. Then
            $Q_0^{(\theta)}(z)$ is uniformly bounded for $z\in
            \cA$.
        \end{lemma}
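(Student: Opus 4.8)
The plan is to establish a uniform bound on $Q_0^{(\theta)}(z)$ for $z\in\cA$ by examining the two integral terms in formula \eqref{Eq:Q0Def} separately and controlling each resolvent that appears. First I would recall that $\cA=[E_j-\delta/2,E_j+\delta/2]+i[-g^{2-\epsilon},\infty)$, so every $z\in\cA$ has $\Re z$ within $\delta/2$ of $E_j$ and $\Im z\geq -g^{2-\epsilon}$; the bound must be uniform over this whole region.

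\textbf{The diagonal term.} For the first integral in \eqref{Eq:Q0Def}, the resolvent is the scalar $(E_j+e^{-i\vartheta}|k|-z)^{-1}$ sandwiched by $\Pel(\theta)$, and the integration is restricted by $\chi_{|k|\geq\rho_0}$. Writing $z=E_j+x+iy$ with $|x|\leq\delta/2$ and $y\geq -g^{2-\epsilon}$, the denominator is $|k|\cos\vartheta - x + i(|k|\sin\vartheta... )$—more precisely $|E_j+e^{-i\vartheta}|k|-z| \geq |k|\sin\vartheta + (\text{something})$; since $|k|\geq\rho_0=g^{2-2\epsilon}$ we get $|k|\sin\vartheta\geq g^{2-2\epsilon}\sin\vartheta$, which dominates $|y|\leq g^{2-\epsilon}$ once $g$ is small (here $\epsilon<1/3$ is used, though even $\epsilon<1$ would suffice). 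Hence $|E_j+e^{-i\vartheta}|k|-z|\geq \tfrac12|k|\sin\vartheta$ say, so the scalar factor is bounded by $C/(|k|\sin\vartheta)$. Combined with the coupling functions $\G{0}{1}{\theta}{k}$ and $\G{1}{0}{\theta}{k}$, whose norms decay like $\kappa(e^{-\theta}|k|)/\sqrt{|k|}$ times polynomial factors in $|k|$ coming from $p_j$ and $k\times\epsilon_\mu$, and using that $\Hel(\theta)$ is relatively bounded so that $\|p_j\Pel(\theta)\|<\infty$, the $k$-integral converges: the integrand behaves like $\kappa(e^{-\theta}|k|)^2/(|k|^2\sin\vartheta)$ times a polynomial, which is integrable at infinity by the fast decay of $\kappa$, and the $\chi_{|k|\geq\rho_0}$ cutoff removes the $|k|\to 0$ singularity.

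\textbf{The off-diagonal term.} For the second integral, the resolvent is $\Pelbar(\theta)(\Hel(\theta)+e^{-i\vartheta}|k|-z)^{-1}\Pelbar(\theta)$, and here no $|k|$-cutoff is present, so I must control the $|k|\to 0$ behavior. On $\ran\Pelbar(\theta)$, the operator $\Hel(\theta)-E_j$ is bounded invertible (its spectrum avoids a neighborhood of $E_j$ of radius $\delta$, using dilation analyticity, \cite{Bachetal1999S}), and more: $\Im$ of the spectrum of $\Hel(\theta)$ away from the discrete part is controlled by $\sin\vartheta$, so the standard Bach--Fr\"ohlich--Sigal estimate gives $\|\Pelbar(\theta)(\Hel(\theta)+e^{-i\vartheta}|k|-z)^{-1}\Pelbar(\theta)\|\leq C/(\dist(z,\text{stuff})+|k|\sin\vartheta)$, uniformly bounded since $z$ stays away from $\sigma(\Hel(\theta)\!\restriction\!\ran\Pelbar(\theta))$ by at least order $\delta$, modified near the continuum by $\sin\vartheta$. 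The point is that this resolvent norm is bounded by a constant $C$ (depending on $\delta$ and $\vartheta$) uniformly in $k$ and in $z\in\cA$. Then the $k$-integral $\int dk\, |\G{0}{1}{\theta}{k}|\cdot C\cdot|\G{1}{0}{\theta}{k}|$ converges: near $k=0$ the coupling functions give $1/|k|$ from $\kappa^2/|k|$, but the $2e^{-\theta}G\cdot p_j$ term carries no compensating power, so one genuinely needs the resolvent to supply decay—indeed as $|k|\to 0$ one can use $(\Hel(\theta)-E_j+e^{-i\vartheta}|k|-\ldots)^{-1}\to(\Hel(\theta)-E_j-\ldots)^{-1}$ which is bounded, and the $1/|k|$ singularity from $\kappa^2/|k|$ is integrable in three dimensions ($dk = |k|^2 d|k|\,d\omega$), so actually no delicate cancellation is needed here. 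At infinity, $\kappa(e^{-\theta}|k|)$ decays faster than any polynomial, beating the polynomial growth from $p_j$ and $k\times\epsilon_\mu$.

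\textbf{Main obstacle.} The technical heart is the uniform bound on the off-diagonal resolvent $\|\Pelbar(\theta)(\Hel(\theta)+e^{-i\vartheta}|k|-z)^{-1}\Pelbar(\theta)\|$ over all $k\in\R^3$ and $z\in\cA$ simultaneously—in particular making sure the bound does not degenerate as $\Im z$ ranges over all of $[-g^{2-\epsilon},\infty)$. Here one uses that $\Hel(\theta)$ on $\ran\Pelbar(\theta)$ has, by dilation analyticity and the spectral structure of $\Hel$, resolvent set containing a region of the form $\{z:\dist(z,\R)<c\sin\vartheta$ or $\dist(z,\{E_i\}_{i\neq j}\cup[\Sigma,\infty))>\delta/2\}$ appropriately, so that adding the non-negative-real-part term $e^{-i\vartheta}|k|$ (which has $\Re=|k|\cos\vartheta\geq 0$ and $\Im=-|k|\sin\vartheta$) only pushes things further into the resolvent set in a controlled way; the hypothesis $g^\epsilon/\sin\vartheta\leq 1/2$ enters to absorb the worst case $\Im z=-g^{2-\epsilon}$ against the $\sin\vartheta$ gap. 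The rest is routine estimation of convergent $k$-integrals using the properties of $\kappa$ and the relative boundedness of the momentum operators on the spectral subspaces of $\Hel$.
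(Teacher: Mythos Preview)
Your proposal is correct and follows essentially the same approach as the paper: bound the diagonal scalar resolvent from below by $\tfrac{1}{2}|k|\sin\vartheta$ via the imaginary part together with the hypothesis $g^\epsilon/\sin\vartheta\leq 1/2$, and bound the off-diagonal resolvent $\Pelbar(\theta)(\Hel(\theta)+e^{-i\vartheta}|k|-z)^{-1}\Pelbar(\theta)$ uniformly in $k$ and $z\in\cA$. The paper sharpens your off-diagonal step by splitting $\Pelbar(\theta)$ into the discrete part (where $|E_i+e^{-i\vartheta}|k|-z|\geq \sin\vartheta\,\delta/2 - g^{2-\epsilon}$ for $i\neq j$) and the continuous part $\overline{P}_{disc}(\theta)$, invoking Lemma~\ref{Lm:BachBound2} for the latter.
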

        \begin{proof}
            The proof follows \cite[Chapter IV]{Bachetal1998Q}, using, however, the
            following estimates:
            For the first summand in equation \eqref{Eq:Q0Def}, we use the estimate
            $|e^{-i\vartheta}|k|+E_j-z|\geq |\Im(
            e^{-i\vartheta}|k|+E_j-z)|\geq
            |\sin\vartheta|k| -g^{2-\epsilon}|\geq
            |\sin\vartheta|\cdot||k|-\rho_0/2|\geq
            1/2\sin\vartheta |k|$, for $|k|\geq \rho_0$.
            For the second summand in \eqref{Eq:Q0Def}, observe that for all $E_i$
            with $i\neq j$ we have
            $|E_i+e^{-\theta}|k|-z|\geq
            \sin\vartheta\delta/2-g^{2-\epsilon}\geq 1/4 \delta
            \sin\vartheta$ and that by Lemma \ref{Lm:BachBound2}
            $$\left\|(\Hel(\theta)-(z-e^{-i\vartheta}|k|))^{-1}\overline{P}_{disc}(\theta)\right\|\leq
            \frac{2}{\Sigma-\eta-\Re z+\cos\vartheta |k|}.$$
        \end{proof}
        This has the following immediate corollary:
        \begin{cor}\label{Cor:NumRanQ0}
            There exists a constant $C>0$ such that for all
            $z\in\cA$
            $$\NumRan(E_j-g^2Q_0^{(\theta)}(z))\subset D(E_j,
            C\cdot g^2).$$
        \end{cor}
        We use the following lemma to estimate  the inverse  of
        the
        Feshbach operator:
        \begin{lemma}\label{Lm:GenInvEst}
            Suppose $A$ is a bounded operator on a Banach space and let
            $A'$ be similar to $A$, i.e., there exists a
            bounded, bounded invertible operator $G$ such that
            $A'=GAG^{-1}$. Moreover, let $B$ be a another bounded
            operator. Then for any $q>1$ and for all $z\notin D(\NumRan(A), q\cdot \|B\| \|G  \| \cdot \| G^{-1}
            \|  )$ the following estimate holds:
            $$\left\|(A'+B-z)^{-1}\right\|\leq \|G  \| \cdot \|
            G^{-1}\| \frac{q}{q-1}
                \cdot \dist(z,
            \NumRan(A))^{-1}.$$
            In particular,
            $\sigma(A'+B)\subset D(\NumRan(A), q\cdot \|B\| \|G  \| \cdot \|
            G^{-1} \|  ).$

        \end{lemma}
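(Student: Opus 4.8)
The plan is to reduce the resolvent bound for $A'+B-z$ to the resolvent bound for $A-z$ via the similarity transform, and then to control $(A-z)^{-1}$ by the distance to the numerical range. First I would write, for $z\notin\NumRan(A)$,
\[
  (A'+B-z)^{-1} = G(A-z)^{-1}G^{-1}\bigl[\id + BG(A-z)^{-1}G^{-1}\bigr]^{-1},
\]
which is valid as soon as the Neumann series for the bracket converges, i.e.\ as soon as $\|BG(A-z)^{-1}G^{-1}\|<1$. The elementary but essential input is the standard fact that for a bounded operator on a Hilbert (or Banach) space one has $\|(A-z)^{-1}\|\leq \dist(z,\NumRan(A))^{-1}$; on a general Banach space one should instead invoke the analogous bound with the numerical range defined via a semi-inner-product (or simply take this estimate as the working definition of $\NumRan$ used in the paper, since that is how it is applied elsewhere). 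Granting that, $\|BG(A-z)^{-1}G^{-1}\|\leq \|B\|\,\|G\|\,\|G^{-1}\|\,\dist(z,\NumRan(A))^{-1}$.

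Next I would impose the hypothesis $z\notin D(\NumRan(A),q\|B\|\|G\|\|G^{-1}\|)$, which says exactly that $\dist(z,\NumRan(A))>q\|B\|\|G\|\|G^{-1}\|$. Hence $\|BG(A-z)^{-1}G^{-1}\|< 1/q<1$, the Neumann series converges, and $\|[\id+BG(A-z)^{-1}G^{-1}]^{-1}\|\leq (1-1/q)^{-1}=q/(q-1)$. Combining this with the factorization above and with $\|G(A-z)^{-1}G^{-1}\|\leq \|G\|\|G^{-1}\|\dist(z,\NumRan(A))^{-1}$ yields
\[
  \|(A'+B-z)^{-1}\|\leq \|G\|\,\|G^{-1}\|\,\frac{q}{q-1}\,\dist(z,\NumRan(A))^{-1},
\]
which is the asserted estimate. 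The statement $\sigma(A'+B)\subset D(\NumRan(A),q\|B\|\|G\|\|G^{-1}\|)$ is then immediate: for $z$ outside that disc-neighbourhood the resolvent exists and is bounded, so $z\in\rho(A'+B)$.

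The only genuine subtlety — the step I expect to be the main obstacle — is the numerical-range resolvent bound $\|(A-z)^{-1}\|\leq\dist(z,\NumRan(A))^{-1}$ in the Banach-space setting stated in the lemma; on a Hilbert space it is textbook, but on a Banach space it requires either a suitable definition of the numerical range (via a semi-inner-product, where it still holds) or restricting to the Hilbert-space case actually used in the applications. I would simply fix the convention so that this bound holds by definition, after which everything else is a one-line Neumann-series argument. Nothing else in the proof is delicate: the similarity transform passes through cleanly because $G$ and $G^{-1}$ are bounded, and the factor $\|G\|\|G^{-1}\|$ appears once from conjugating the resolvent and is absorbed into the radius of the excluded neighbourhood through the parameter $q$.
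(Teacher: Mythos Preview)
Your proposal is correct and follows essentially the same route as the paper: bound $(A'-z)^{-1}=G(A-z)^{-1}G^{-1}$ via the numerical-range estimate for $(A-z)^{-1}$, then sum the Neumann series for $(A'+B-z)^{-1}$ using $\|B(A'-z)^{-1}\|<1/q$. Your remark about the Banach-space meaning of $\NumRan$ is a fair caveat; the paper does not comment on it and simply uses the Hilbert-space bound, which is all that is needed in the applications.
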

        \begin{proof}
            First, observe that for all $z\notin \NumRan(A)$ by similarity
            $$\|(A'-z)^{-1}\|\leq \|G  \| \cdot \| G^{-1}
            \|\cdot \|(A-z)^{-1}\|\leq \|G  \| \cdot \| G^{-1}
            \|\cdot \dist(z,
            \NumRan(A))^{-1} .$$
            By a series expansion we obtain for
            $z\notin D(\NumRan(A), q\cdot \|B\|\cdot\|G  \| \cdot \| G^{-1}
            \|)$
            $$(A'+B-z)^{-1}=(A'-z)^{-1}\sum_{n=0}^\infty
            [-B(A'-z)^{-1}]^n.$$
            Taking the norm of both sides implies the claim.
        \end{proof}
        Following \cite{Bachetal1999S},  we
        control the Feshbach operator $\cF$ for $z\in D(E_j, \rho_0/2)$ as follows (see Figure
        \ref{Fig:NumRanZ}):
        \begin{lemma}\label{Lm:FeshEst} Let $0<\vartheta<\theta_0$
        and $0<g\ll \vartheta$ small enough. Then the following statements hold:
            \begin{itemize}
            \item[a)] There are constants $C_1, C_2>0$ such that $\cF[z]$ is bounded invertible for all $z\in
            D(E_j, \rho_0/2)\setminus D(\NumRan (E_j-g^2Z(0)\otimes \idf
            +e^{-\theta}\idel\otimes H_f)|_{\ran \PO(0)}, C_1\cdot g^{2+\epsilon })$, and for $\lambda \in [E_j-\rho_0/2,
            E_j+\rho_0/2]$ the estimate
            \begin{equation*}
                \\ \left\|\cF[\lambda]^{-1}\right\|\leq
                \frac{C_2}{\sin\nu\sqrt{(E_j-\lambda)^2+cg^4}}
            \end{equation*}
            holds. The same holds for $(E_j-z-g^2Q^{(\theta)}(z)\otimes \idf
            +e^{-\theta}\idel\otimes H_f)|_{\ran \PO(\theta)}$.

            \item[b)] There is a constant $C>0$ such that
            for all $z\in
            \C\setminus \NumRan (E_j-g^2Z(0))|_{\ran \Pel(0)}$ the operator
            $(E_j-z-g^2Z(\theta))|_{\ran \Pel(\theta)}$  is bounded
            invertible and fulfills the estimate
            \begin{multline}\|[(E_j-z-g^2Z(\theta))|_{\ran \Pel(\theta)}]^{-1}\|\\\leq \frac{C}{\dist(z,\NumRan (E_j-g^2Z(0))|_{\ran \Pel(0)})
            }.\end{multline}
            There are constants $C_1, C_2>0$ such that for all $z\in
            D(E_j, \rho_0/2)\setminus \linebreak D(\NumRan (E_j-g^2Z(0))|_{\ran \Pel(0)}, C_1\cdot g^{2+\epsilon
            })$ the operator $(E_j-z-g^2Q_0^{(\theta)}(z))|_{\ran
            \Pel(\theta)}$ is bounded invertible and fulfills
            \begin{multline}\|[(E_j-z-g^2Q_0^{(\theta)}(z))|_{\ran \Pel(\theta)}]^{-1}\|\\\leq \frac{C_2}{\dist(z,\NumRan (E_j-g^2Z(0))|_{\ran \Pel(0)})
            }.\end{multline}
            \end{itemize}

        \end{lemma}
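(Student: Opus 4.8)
The plan is to prove Lemma \ref{Lm:FeshEst} by reducing all three operators to the form treated by the abstract estimate in Lemma \ref{Lm:GenInvEst}, using Remark \ref{Rem:ZSim} (similarity of the $Z$'s) and Corollary \ref{Cor:NumRanQ0} (localization of the numerical range of $E_j-g^2Q_0^{(\theta)}$) as the two inputs that control the geometry.

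For part b), I would start with the operator $(E_j-z-g^2Z(\theta))|_{\ran\Pel(\theta)}$. By Remark \ref{Rem:ZSim}, $Z(\theta)$ is similar to $Z(0)$ via $G:=\cU_{el}(\theta)|_{\ker(\Hel(0)-E_j)}$, which is bounded with bounded inverse (finite-dimensional spaces, $|\Im\theta|<\pi/2$). Hence $E_j-g^2Z(\theta)=G(E_j-g^2Z(0))G^{-1}$, and applying Lemma \ref{Lm:GenInvEst} with $A:=E_j-g^2Z(0)$, $A':=E_j-g^2Z(\theta)$, $B:=0$, $q$ arbitrary (so the $D(\cdot,q\|B\|\cdots)$ shrinks to $\NumRan(A)$), gives bounded invertibility off $\NumRan(E_j-g^2Z(0))|_{\ran\Pel(0)}$ with the stated bound, $C:=\|G\|\,\|G^{-1}\|$. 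For the $Q_0^{(\theta)}$ statement I would write $E_j-z-g^2Q_0^{(\theta)}(z)=(E_j-z-g^2Z(\theta))+g^2(Z(\theta)-Q_0^{(\theta)}(z))=:A'+B$ with $B:=g^2(Z(\theta)-Q_0^{(\theta)}(z))$. The key quantitative input is the bound $g^2\|Q_0^{(\theta)}(z)-Z(\theta)\|\leq C g^{2+\epsilon}$, which should follow from Lemma \ref{Lm:QBound} (uniform bound on $Q_0^{(\theta)}$ for $z\in\cA$) together with an estimate comparing $Q_0^{(\theta)}(z)$ to its ``on-shell'' value $Z(\theta)$ — essentially Corollary \ref{Cor:ZAppr} and \eqref{Eq:ZAlphaZero}, and the fact that $|z-E_j|\leq\rho_0/2=g^{2-2\epsilon}/2$ so the energy denominators differ from the $z=E_j$ ones only by $\cO(\rho_0)$. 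Then Lemma \ref{Lm:GenInvEst} with this $B$ and $q$ fixed $>1$ gives bounded invertibility off $D(\NumRan(E_j-g^2Z(0))|_{\ran\Pel(0)},C_1 g^{2+\epsilon})$ with the bound involving $\dist(z,\NumRan(E_j-g^2Z(0)))^{-1}$; note one uses here that the $q/(q-1)$ factor only changes $C$ and the radius by a constant.

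For part a), the Feshbach operator $\cF[z]$ on $\ran\PO(\theta)=\ran\Pel(\theta)\otimes\chiH$ differs from $(E_j-z+e^{-\theta}\idel\otimes H_f-g^2Q^{(\theta)}(z))|_{\ran\PO(\theta)}$ by the ``remainder'' $R(z):=\cF[z]-(E_j-z+e^{-\theta}\idel\otimes H_f-g^2Q^{(\theta)}(z))\PO(\theta)$ that appears in the proof of Theorem \ref{Thm:TimeDecay}; by Lemma \ref{Lm:QAppr} this remainder has norm $\cO(g^{2+\epsilon})$ (up to $\vartheta$-factors). So it suffices to control $(E_j-z+e^{-\theta}\idel\otimes H_f-g^2Q^{(\theta)}(z))|_{\ran\PO(\theta)}$, which I would split as $\big[(E_j-z-g^2Z(\theta))\otimes\idf+e^{-\theta}\idel\otimes H_f\big]|_{\ran\PO(\theta)}+B'$ with $B':=g^2(Z(\theta)-Q^{(\theta)}(z))\otimes\idf$, $\|B'\|=\cO(g^{2+\epsilon})$ by the $Q^{(\theta)}$-analogue of the comparison estimate. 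By Remark \ref{Rem:ZSim} the leading operator is similar to $\big[(E_j-g^2Z(0))\otimes\idf+e^{-\theta}\idel\otimes H_f\big]|_{\ran\PO(0)}$ via $G\otimes\id$; its numerical range is contained in $\NumRan(E_j-g^2Z(0))+e^{-i\vartheta}[0,\rho_0)$, and since $\NumRan(E_j-g^2Z(0))\subset E_j-g^2A(c,a,b)$ lives in a cone of opening related to $\nu=\min\{\vartheta,\arctan(c/(2a))\}$ below $E_j$, the geometric fact ``$\dist(\lambda,\NumRan)\gtrsim\sin\nu\cdot|\,\text{stuff}\,|$'' for real $\lambda$ near $E_j$ gives the denominator $\sin\nu\sqrt{(E_j-\lambda)^2+cg^4}$ (the $e^{-\theta}H_f$ term only pushes points further into the lower half plane, away from the real axis, so it does not hurt this lower bound). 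Then Lemma \ref{Lm:GenInvEst} applied to $A:=(E_j-g^2Z(0))\otimes\idf+e^{-\theta}\idel\otimes H_f$, $A'$ its $\theta$-version, $B:=B'$, and the same reasoning handles both $\cF[z]$ (after absorbing $R(z)$ into $B$, its norm still $\cO(g^{2+\epsilon})$) and the operator $(E_j-z-g^2Q^{(\theta)}(z)\otimes\idf+e^{-\theta}\idel\otimes H_f)|_{\ran\PO(\theta)}$.

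The main obstacle I expect is the geometric lower bound on $\dist\big(\lambda,\NumRan(E_j-g^2Z(0)\otimes\idf+e^{-\theta}\idel\otimes H_f)\big)$ for real $\lambda$: one must verify carefully that $\NumRan$ is localized in the sector $E_j+ig^2 c + g^2\big([-a,a]+i[0,b]\big) + e^{-i\vartheta}[0,\infty)$ (the last summand from $H_f\geq 0$ after dilation) and that, since $\Im(e^{-i\vartheta}t)=-t\sin\vartheta\leq 0$ while the ``$+ig^2c$'' term sits at height $+g^2c$ in the upper half-plane, every such point stays a distance $\gtrsim\sin\nu\,\max\{g^2,|E_j-\lambda|\}$ from the real point $\lambda$ — uniformly in the $H_f$-variable. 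Getting the constant structure right (a single $\sin\nu$ in the denominator, with $\rho_0$-cutoff errors absorbed) is the delicate bookkeeping; once that trigonometric estimate is in hand, the rest is a routine application of Lemma \ref{Lm:GenInvEst} plus the norm bounds from Lemmas \ref{Lm:QBound}, \ref{Lm:QAppr} and Corollary \ref{Cor:ZAppr}.
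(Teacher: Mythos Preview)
Your approach is essentially the same as the paper's: use the similarity from Remark \ref{Rem:ZSim} to pass to the $\theta=0$ numerical range, then apply the abstract perturbation Lemma \ref{Lm:GenInvEst} with the $\cO(g^{2+\epsilon})$ error bounds from Lemma \ref{Lm:QAppr} and Corollary \ref{Cor:ZAppr}, and finish with the geometric localization of $\NumRan(E_j-g^2Z(0)\otimes\idf+e^{-\theta}\idel\otimes H_f)$ in a cone of half-opening $\nu$ below the real axis. One small slip in your write-up: since $\NumRan Z\subset A(c,a,b)=ic+[-a,a]+i[0,b]$, the set $\NumRan(E_j-g^2Z(0))$ lies in the \emph{lower} half-plane at height $\leq -g^2c$, not ``$+ig^2c$ in the upper half-plane''; this is precisely why the $e^{-i\vartheta}H_f$ contribution (also with negative imaginary part) pushes points \emph{further} from the real axis rather than back toward it, and is what makes the conical bound work.
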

        \begin{figure}[ht]
        \centering
        \includegraphics[width=\textwidth,clip=true]{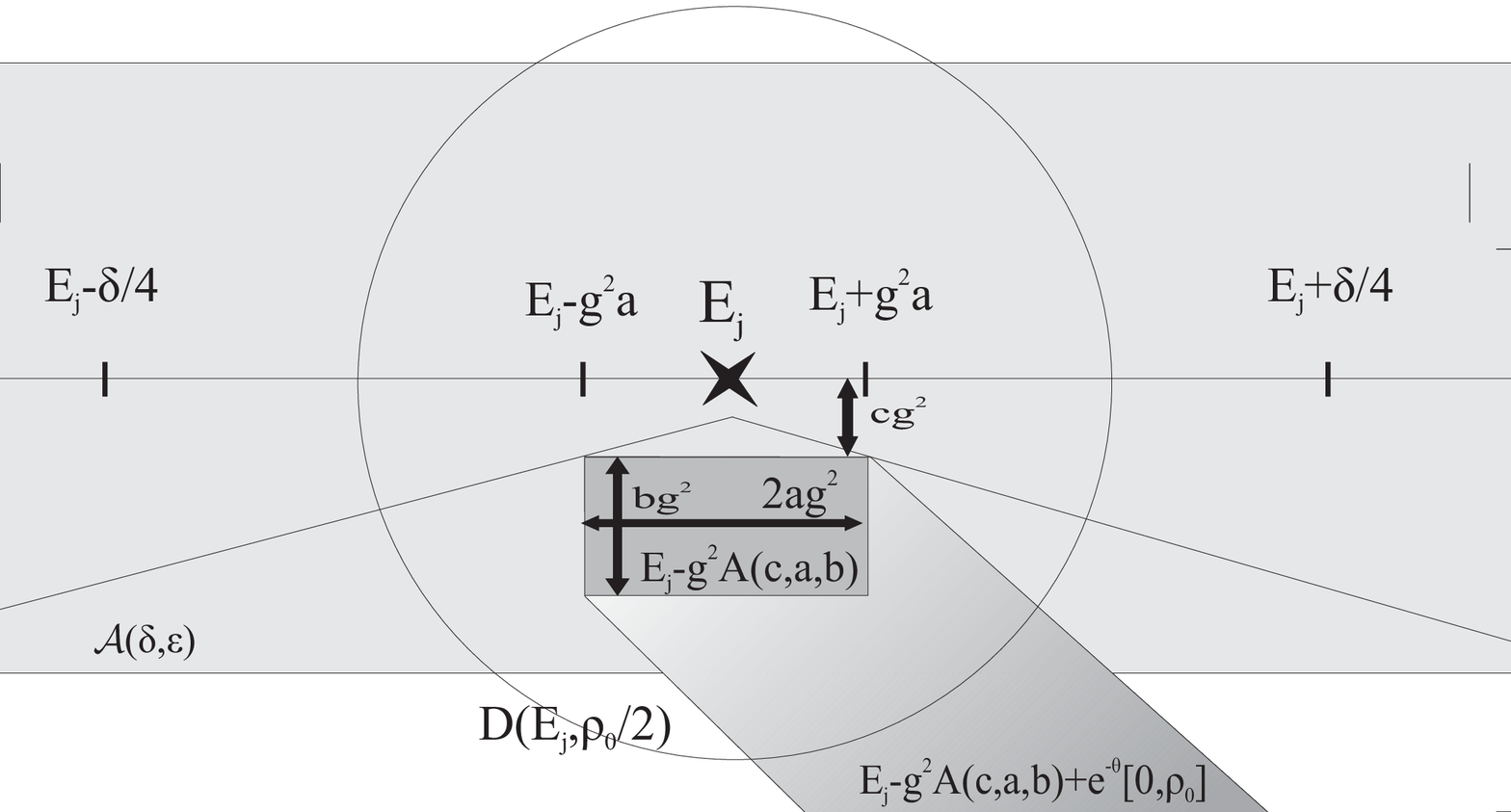}
        \caption{The numerical ranges of the operators $E_j-g^2 Z(0)$ and $\NumRan(E_j-g^2Z(0)\otimes \idf
            +e^{-\theta}\idel\otimes H_f)|_{\ran \PO(0)}$ }
        \label{Fig:NumRanZ}
        \end{figure}
        \begin{proof}
                By similarity (cf. Remark \ref{Rem:ZSim}) we obtain
                immediately for some $C_3>0$ that
            \begin{multline*}
                \|[(E_j-z-g^2Z(\theta)\otimes \idf
                +e^{-\theta}\idel\otimes H_f)|_{\ran \PO(\theta)}]^{-1}\|\\\leq C_1\cdot \dist(z, \NumRan(E_j-g^2Z(0)\otimes \idf
                +e^{-\theta}\idel\otimes H_f)|_{\ran \PO(0)})^{-1}.
            \end{multline*}
            By Lemma \ref{Lm:QAppr}, Corollary
            \ref{Cor:ZAppr}, and Lemma \ref{Lm:GenInvEst} there are constants $C_1, C_2>0$ such that
            \begin{multline}
            \|\cF[z]^{-1}\|\\\leq C_2 \dist(z, \NumRan(E_j-g^2Z(0)\otimes \idf
            +e^{-\theta}\idel\otimes H_f)|_{\ran \PO(0)})^{-1}
            \end{multline}
            follows for
                $z \notin D(\NumRan(E_j-g^2Z(0)\otimes \idf
            +e^{-\theta}\idel\otimes H_f)|_{\ran \PO(0)}), C_1\cdot g^{2+\epsilon})$.   It follows that
            $\NumRan[ (-g^2Z(0)\otimes \idf+ e^{-\theta}\idel\otimes H_f
            )|_{\ran \PO(0)}]\subset -g^2
            A(c,a,b)+e^{-\theta}[0, \rho_0]$ (see Figure \ref{Fig:NumRanZ}). By geometrical
            considerations, we see that this set is contained in
            the conical region
            $-i\frac{c}{2}g^2-i\{re^{i\phi}|-(\nu-\tfrac{\pi}{2})\leq\phi\leq \nu-\tfrac{\pi}{2},
            r\in[0,\infty)\}$. This, in turn, implies  the claim.
            The claims in b) follow by the same reasoning.
        \end{proof}
        We do not see that the estimate of Lemma \ref{Lm:FeshEst} a)
        is true for $\lambda\in [E_j-\delta/2,E_j+\delta/2]\setminus
        (E_j-\rho_0/2,E_j+\rho_0/2)$ as used in \cite[Proof of Theorem 3.5]{Bachetal1999S} (see also the remark after Lemma \ref{Lm:ZAppr}). Thus we bound
        $\cF[\lambda]^{-1}$ differently in that region in the next
        lemma.
        \begin{lemma}\label{Lm:NumRanContr}
        Let $0<\vartheta<\theta_0$
        and $0<g\ll \vartheta$ small enough.
            Then $\cF$ is bounded invertible for all $z\in
        \cA\setminus D(E_j, \rho_0/2)$ and there is a constant $C>0$ such that for $g$ small enough and $z\in
        \cA\setminus D(E_j, \rho_0/2)$  the
        numerical range of $\cF$ is localized as
        $$\NumRan(\cF+z)\subset D(E_j+e^{-\theta}[0,\rho_0], C g^2).$$
        In particular, for
        $\lambda\in [E_j-\delta/2, E_j+\delta/2]\setminus
        (E_j-\rho_0/2, E_j+\rho_0/2)$ the estimate
        $$\|\cF[\lambda]^{-1}\|\leq  \frac{1}{\sin\vartheta(|\lambda-E_j|-Cg^2)}  $$
        holds.
        Analogous statements hold with $\cF$ replaced by $E_j-z-g^2
        Q_0^{(\theta)}(z)$.
        \end{lemma}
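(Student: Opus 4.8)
The strategy is to use the Feshbach formula \eqref{Eq:Fesh} together with the approximation of $\cF$ by the operator $E_j - z + e^{-\theta}\idel\otimes H_f - g^2 Q^{(\theta)}(z)$ (restricted to $\ran\PO(\theta)$) that was established in Lemma~\ref{Lm:QAppr}, and then localize the numerical range of the latter. First I would recall that on $\ran\PO(\theta)$ we have $\idel\otimes H_f = \idel\otimes H_f\chiH$, so that $\NumRan(e^{-\theta}\idel\otimes H_f|_{\ran\PO(\theta)})\subset e^{-\theta}[0,\rho_0]$; combined with Corollary~\ref{Cor:NumRanQ0}, which gives $\NumRan(E_j - g^2 Q^{(\theta)}(z))\subset D(E_j, Cg^2)$ (the proof of Corollary~\ref{Cor:NumRanQ0} applies verbatim to $Q^{(\theta)}$ in place of $Q_0^{(\theta)}$, using Lemma~\ref{Lm:QBound}), the numerical range of the model operator $E_j - z + e^{-\theta}\idel\otimes H_f - g^2 Q^{(\theta)}(z) + z$ lies in $D(E_j + e^{-\theta}[0,\rho_0], Cg^2)$. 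By Lemma~\ref{Lm:QAppr} the difference between $\cF$ and this model operator has norm $\cO(g^{2+\epsilon})$, hence does not enlarge the numerical range beyond an additional $Cg^{2+\epsilon}$, which is absorbed into $Cg^2$. This gives the claimed localization $\NumRan(\cF+z)\subset D(E_j + e^{-\theta}[0,\rho_0], Cg^2)$, and the same argument with $Q_0^{(\theta)}$ replacing $Q^{(\theta)}$ throughout handles the analogous statement for $E_j - z - g^2 Q_0^{(\theta)}(z)$.

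Next I would convert this numerical range bound into the invertibility statement and the quantitative resolvent estimate. For $z\in\cA\setminus D(E_j,\rho_0/2)$, the disc $D(E_j + e^{-\theta}[0,\rho_0], Cg^2)$ is (for $g$ small) a thin sausage hugging the tilted segment $E_j + e^{-\theta}[0,\rho_0]$, which points into the lower half-plane at angle $-\vartheta$ and has length $\rho_0 = g^{2-2\epsilon}$; in particular it is contained in the set $\{\,w : \Im w \le 0,\ |w - E_j| \le g^{2-2\epsilon} + Cg^2\,\}$. Since $z\in\cA$ means $\Im z \ge -g^{2-\epsilon}$, while any point $w$ in the sausage has $\Im w \le \sin\vartheta\cdot|{\rm Re}(w)-E_j|/\cos\vartheta$ — more precisely $|{\rm Im}\,w|$ controlled below in terms of $|{\rm Re}\,w - E_j|$ along the $e^{-\theta}$ direction — the point $z$ (which by hypothesis satisfies $|z - E_j|\ge\rho_0/2$) stays a distance $\gtrsim \sin\vartheta\,(|\lambda - E_j| - Cg^2)$ away from $\NumRan(\cF+z)$ when $z=\lambda$ is real with $|\lambda - E_j|\ge\rho_0/2$. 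Feeding $\dist(\lambda, \NumRan(\cF+\lambda))\ge \sin\vartheta(|\lambda-E_j|-Cg^2)$ into the standard numerical-range resolvent bound $\|(\cF[\lambda])^{-1}\| = \|(\cF+\lambda - \lambda)^{-1}\| \le \dist(\lambda,\NumRan(\cF+\lambda))^{-1}$ yields exactly the asserted inequality $\|\cF[\lambda]^{-1}\|\le (\sin\vartheta(|\lambda-E_j|-Cg^2))^{-1}$. Bounded invertibility for all $z\in\cA\setminus D(E_j,\rho_0/2)$ follows because the sausage is disjoint from this punctured neighborhood once $g$ is small relative to $\vartheta$.

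The main obstacle, and the reason this lemma is stated separately rather than folded into Lemma~\ref{Lm:FeshEst}, is the geometry in the intermediate regime $\rho_0/2 \le |\lambda - E_j| \le \delta/2$: here the relevant numerical range is \emph{not} the thin cone around $E_j - g^2 Z(\theta)$ used in Lemma~\ref{Lm:FeshEst}(a) (whose opening angle $\nu$ is what one wants), but the wider sausage $D(E_j + e^{-\theta}[0,\rho_0], Cg^2)$ coming from the full $H_f$ term and the crude bound on $Q^{(\theta)}$; one must check carefully that this wider set still misses the real axis by the stated margin, and that is precisely where the factor $(\sin\vartheta)^{-1}$ rather than $(\sin\nu)^{-1}$ enters. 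Getting the constant $C$ in $|\lambda - E_j| - Cg^2$ to be independent of $g$ requires tracking that the $g^{2+\epsilon}$ error from Lemma~\ref{Lm:QAppr} and the $g^2$ radius from Corollary~\ref{Cor:NumRanQ0} are both $o(\rho_0) = o(g^{2-2\epsilon})$, so that on the region $|\lambda - E_j|\ge\rho_0/2$ the quantity $|\lambda-E_j| - Cg^2$ is a positive fraction of $|\lambda-E_j|$; I expect this bookkeeping, rather than any deep estimate, to be the part demanding the most care.
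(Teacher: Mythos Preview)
Your approach is correct and reaches the same numerical-range localization as the paper, but by a shorter and more modular route. You invoke Lemma~\ref{Lm:QAppr} to replace $\cF$ by $(E_j-z+e^{-\theta}\idel\otimes H_f-g^2Q^{(\theta)}(z))\PO(\theta)$ up to $\cO(g^{2+\epsilon})$, and then absorb $g^2Q^{(\theta)}(z)$ into the $Cg^2$ radius via its uniform bound on $\cA$. The paper instead works directly from the Feshbach formula~\eqref{Eq:Fesh}: after noting $\|\PO(\theta)W_g(\theta)\PO(\theta)\|=\cO(g^{2+\epsilon})$, it bounds the second Feshbach term $\PO W_g\PObar[\PObar(H_g-z)\PObar]^{-1}\PObar W_g\PO$ by $\cO(g^2)$ through a Neumann expansion of the $\PObar$-resolvent, inserting factors $|B_\theta(\rho_0)|^{\pm 1/2}$ and counting powers of $\rho_0$ to show explicitly that the two flanking $W_g$'s together with the outer $\PO$'s kill the $\rho_0^{-1}$ divergence of the bare resolvent. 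Your route hides this structural point inside the uniform bound on $Q^{(\theta)}$, which the paper asserts only as a parenthetical remark after~\eqref{Eq:Q0Def} rather than as a formal lemma; consequently your citation of Lemma~\ref{Lm:QBound} is slightly off (that lemma treats $Q_0^{(\theta)}$, not $Q^{(\theta)}$), though the extension is indeed routine via the spectral theorem for $H_f$. The geometric conversion from the numerical-range localization to the resolvent estimate $\|\cF[\lambda]^{-1}\|\le(\sin\vartheta(|\lambda-E_j|-Cg^2))^{-1}$ is handled the same way in both arguments.
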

        \begin{figure}[h]
        \centering
        \includegraphics[width=\textwidth,clip=true]{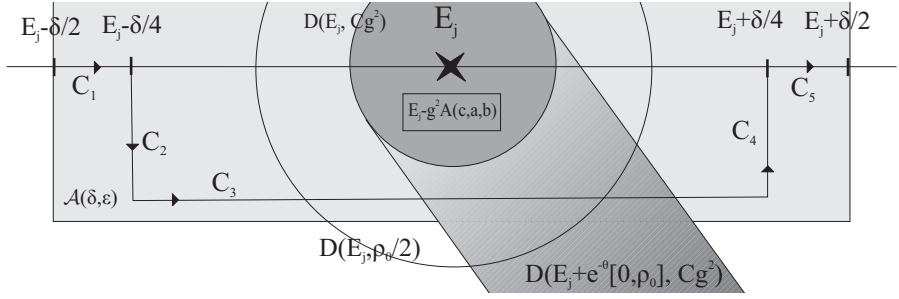}
        \caption{Global localization of the numerical range of $E_j+g^2
        Q_0^{(\theta)}(z)$ and of the Feshbach operator $\cF+z$.}
        \label{Fig:NumRanLoc}
        \end{figure}

        \begin{proof}[Proof of Lemma \ref{Lm:NumRanContr}]
            We have by Lemma \ref{Lm:QAppr} that $\|\PO(\theta) W_g(\theta) \PO(\theta)\|=\cO(g^{2+\epsilon})$. Therefore,  it
                suffices to show that
            $$\|\PO(\theta) W_g(\theta) \PObar(\theta) [\PObar(\theta)(H_g(\theta)-z)\PObar(\theta)]^{-1}\PObar(\theta)
            W_g(\theta)\PO(\theta)\|=\cO(g^2).$$
            Following  \cite[Proof of
            Lemma 3.14]{Bachetal1999S}, we use a Neumann expansion:
            \begin{multline}\label{EQ:Lm:NumRanContr:Pr:0}
                \PObar(\theta) [\PObar(\theta)(H_g(\theta)-z)\PObar(\theta)]^{-1}\PObar(\theta)
                \\=
                 \sum_{n=0}^\infty  \PObar(\theta)[\PObar(\theta)(H_0(\theta)-z)\PObar(\theta)]^{-1}\PObar(\theta)\\\times
                    \left[- \PObar(\theta)W_g(\theta)\PObar(\theta)
                \left[\PObar(\theta)(H_0(\theta)-z)\PObar(\theta)\right]^{-1}\PObar(\theta)
                \right]^n
            \end{multline}
            This expansion is valid if $z\in \cA$ with $\Im z\geq
            C$ for some $C> 0$ (independent of $g$.)
            We define
            $B_\theta(\rho):=\Hel(\theta)\otimes \idf-E_j+e^{-\theta}(\idel\otimes H_f+\rho)$ as in \cite{Bachetal1999S}.
            The right handside of equation
            \eqref{EQ:Lm:NumRanContr:Pr:0} is equal to
            \begin{multline}\label{EQ:Lm:NumRanContr:Pr:2}
               \sum_{n=0}^\infty |B_\theta(\rho_0)|^{-1/2}
                |B_\theta(\rho_0)|^{1/2} \PObar(\theta)\left[\PObar(\theta)(H_0(\theta)-z)\PObar(\theta)\right]^{-1}\PObar(\theta)
                |B_{\bar{\theta}}(\rho_0)|^{1/2}\\\times
                    \Big[-|B_{\bar{\theta}}(\rho_0)|^{-1/2} W_g(\theta)
                    |B_\theta(\rho_0)|^{-1/2}
                |B_\theta(\rho_0)|^{1/2}\PObar(\theta)\\\times
                \left[\PObar(\theta)(H_0(\theta)-z)\PObar(\theta)\right]^{-1}\PObar(\theta)
                |B_{\bar{\theta}}(\rho_0)|^{1/2}\Big]^n
                    |B_{\bar{\theta}}(\rho_0)|^{-1/2}=: R(z).
            \end{multline}
        for all $z\in \cA$ with $\Im z\geq
        C$.
            By Lemma \ref{Lm:BachBound1} and Corollary
            \ref{Cor:BachBound1} the series in equation \eqref{EQ:Lm:NumRanContr:Pr:2} converges uniformly
            for $z\in \cA$ and is thus a holomorphic function of
            $z\in \cA$. Thus
            $$ \PObar(\theta)
            [\PObar(\theta)(H_g(\theta)-z)\PObar(\theta)]^{-1}\PObar(\theta)=R(z)$$
            for all $z\in \cA$ by holomorphic continuation and for all $z\in \cA$
            \begin{multline}\label{EQ:Lm:NumRanContr:Pr:1}
                \PO(\theta) W_g(\theta) \PObar(\theta) \left[\PObar(\theta)(H_g(\theta)-z)\PObar(\theta)\right]^{-1}\PObar(\theta)
                W_g(\theta)\PO(\theta)\\=
                 \PO(\theta)|B_{\bar\theta}(\rho_0)|^{1/2}|B_{\bar\theta}(\rho_0)|^{-1/2} W_g(\theta)R(z)W_g(\theta) |B_\theta(\rho_0)|^{-1/2}
                    |B_\theta(\rho_0)|^{1/2}\PO(\theta).
             \end{multline}
            Note that
            $\||B_{\theta}(\rho_0)|\PO(\theta)\|=\|B_{\theta}(\rho_0)\PO(\theta)\|$
            and
            $\|\PO(\theta)|B_{\bar\theta}(\rho_0)|\|=\|\PO(\theta)B_{\theta}(\rho_0)\|$.
            Thus, using $\|\PO(\theta)|B_{\bar\theta}(\rho_0)|^{1/2}\|\leq
            \|\PO(\theta)|B_{\bar\theta}(\rho_0)|\|\cdot
            \||B_{\bar\theta}(\rho_0)|^{-1/2}\|=\cO(\rho_0^{1/2})$,
            and counting the powers of
            $\rho_0$ in \eqref{EQ:Lm:NumRanContr:Pr:1}, the first claim
            follows. The estimate on the inverse follows by
            geometrical considerations. The claim on $E_j-z-g^2Q_0^{(\theta)}(z)$ follows from Lemma \ref{Lm:QBound}.
        \end{proof}
        Note that due to the appearance of the interaction
        $W_g(\theta)$ on both sides of the resolvent $[\PObar(\theta)(H_g(\theta)-z)\PObar(\theta)]^{-1}$ and due to the
        projections $\PO(\theta)$, the divergence of the resolvent
        for $\rho_0\rightarrow 0$ is completely eliminated (see also
        the remark after Lemma \ref{Lm:FeshExist}).

        We use the following corollary instead of \cite[Theorem
        3.2]{Bachetal1999S}.
        \begin{cor}\label{Cor:OpInv}
            Let $0<\vartheta<\theta_0$
            and $0<g\ll \vartheta$ small enough. Then
            $$\cA\setminus (E_j- D(g^2A(c,a,b), C\cdot
            g^{2+\epsilon})
            +e^{-\theta}[0,\rho_0])\subset \rho(H_g(\theta))$$
            for some $C>0$.
            In particular, the interval $[E_j-\delta/2, E_j+\delta/2]$
            is contained in the resolvent set $\rho(H_g(\theta))$.
        \end{cor}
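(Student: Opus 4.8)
The plan is to read off the corollary from the isospectrality of the Feshbach map. For $z\in\cA$ the resolvent $\PObar(\theta)[\PObar(\theta)(H_g(\theta)-z)\PObar(\theta)]^{-1}\PObar(\theta)$ exists (this was proved, via the Neumann expansion \eqref{EQ:Lm:NumRanContr:Pr:2} together with holomorphic continuation, in the proof of Lemma \ref{Lm:NumRanContr}), so $(H_g(\theta)-z,\PO(\theta))$ is a Feshbach pair and Lemma \ref{Lm:FeshExist} applies: $z\in\rho(H_g(\theta))$ as soon as the Feshbach operator $\cF[z]$ is bounded invertible on $\ran\PO(\theta)$. Hence it suffices to exhibit a subset of $\cA$ on which $\cF[z]$ is bounded invertible and to check that its complement (inside $\cA$) is contained in the excluded region of the statement.

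I would split $\cA$ into the disc $D(E_j,\rho_0/2)$ and its complement. On $\cA\setminus D(E_j,\rho_0/2)$, Lemma \ref{Lm:NumRanContr} already gives that $\cF[z]$ is bounded invertible, so all such $z$ lie in $\rho(H_g(\theta))$. On $D(E_j,\rho_0/2)$, part a) of Lemma \ref{Lm:FeshEst} gives bounded invertibility of $\cF[z]$ for every $z$ outside $D(\NumRan(E_j-g^2Z(0)\otimes\idf+e^{-\theta}\idel\otimes H_f)|_{\ran\PO(0)},\,C_1 g^{2+\epsilon})$. Combining the two, the only points of $\cA$ not yet shown to lie in $\rho(H_g(\theta))$ are contained in that neighbourhood. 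Now, exactly as in the proof of Lemma \ref{Lm:FeshEst}, the hypothesis $\NumRan Z\subset A(c,a,b)$ (recall $Z=Z(0)$) together with $\sigma(H_f|_{\ran\chiH})\subset[0,\rho_0]$ and the standard numerical-range estimate for a tensor sum give $\NumRan(E_j-g^2Z(0)\otimes\idf+e^{-\theta}\idel\otimes H_f)|_{\ran\PO(0)}\subset E_j-g^2A(c,a,b)+e^{-\theta}[0,\rho_0]$; taking the $C_1g^{2+\epsilon}$–neighbourhood and absorbing it into the first summand yields
$$D(\NumRan(E_j-g^2Z(0)\otimes\idf+e^{-\theta}\idel\otimes H_f)|_{\ran\PO(0)},\,C_1 g^{2+\epsilon})\ \subset\ E_j-D(g^2A(c,a,b),\,C_1 g^{2+\epsilon})+e^{-\theta}[0,\rho_0].$$
This is precisely the excluded set with $C:=C_1$, which proves the first assertion. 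I expect the only real work here to be this set-theoretic bookkeeping — propagating the numerical-range bounds correctly through the Minkowski sums — everything else being a direct appeal to Lemmas \ref{Lm:FeshEst}, \ref{Lm:NumRanContr} and the Feshbach property.

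For the last statement I would argue by imaginary parts. Clearly $[E_j-\delta/2,E_j+\delta/2]\subset\cA$, so it is enough to check that this real interval misses the excluded set. Every $w\in A(c,a,b)$ has $\Im w\geq c$, hence every point of $-g^2A(c,a,b)$ has imaginary part $\leq -cg^2$; adding $D(0,Cg^{2+\epsilon})$ keeps the imaginary part $\leq -cg^2+Cg^{2+\epsilon}<0$ for $g$ small, and adding $e^{-\theta}[0,\rho_0]=e^{-i\vartheta}[0,\rho_0]$ only lowers it further (its imaginary part is $-t\sin\vartheta\leq 0$ for $t\in[0,\rho_0]$). Thus $E_j-D(g^2A(c,a,b),Cg^{2+\epsilon})+e^{-\theta}[0,\rho_0]$ lies strictly in the open lower half-plane and cannot meet $[E_j-\delta/2,E_j+\delta/2]$, so the interval is contained in $\cA\setminus(E_j-D(g^2A(c,a,b),Cg^{2+\epsilon})+e^{-\theta}[0,\rho_0])\subset\rho(H_g(\theta))$.
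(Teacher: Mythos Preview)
Your proof is correct and follows essentially the same route as the paper: split $\cA$ into $D(E_j,\rho_0/2)$ and its complement, invoke Lemma~\ref{Lm:NumRanContr} on the latter and Lemma~\ref{Lm:FeshEst}~a) on the former, and conclude via the Feshbach isospectrality of Lemma~\ref{Lm:FeshExist}. Your additional bookkeeping---the Minkowski-sum inclusion for the numerical range and the imaginary-part argument for the final clause---is correct and merely spells out what the paper leaves implicit.
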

        \begin{proof}
            By Lemma \ref{Lm:NumRanContr}, $\cF$ is bounded invertible for all
            $z\in
             \cA\setminus D(E_j, \rho_0/2)$.
             By Lemma \ref{Lm:FeshEst}, it is
             bounded invertible for all $z\in
             D(E_j, \rho_0/2)\setminus D(\NumRan (E_j-g^2Z(0)\otimes \idf
             +e^{-\theta}\idel\otimes H_f)|_{\ran \PO(0)}, C_1\cdot
            g^{2+\epsilon })$.
             Lemma
            \ref{Lm:FeshExist} implies the claim.
        \end{proof}
        \begin{appendix}

            \section{Estimates Taken from Bach, Fr\"{o}hlich, and Sigal \cite{Bachetal1999S}}
                 In the appendix, we quote some important technical
                lemmas from \cite{Bachetal1999S}, which we frequently
                use. We do not give their proofs, since they are very lengthy. However, for the
                orientation of the reader, we describe the
                essential points of the proofs in words.
                \subsection{Existence of the Feshbach Operator}

                First we need certain relative bounds on the
                interaction and bounds on the resolvent.
                \begin{lemmaA}[\cite{Bachetal1999S}, Lemma 3.8] \label{Lm:BachBound2}
                    Let $z\in \C$ with $\Re z< \Sigma-\eta$. Then, for
                    $|\theta|(1+(\Sigma-\eta-\Re z)^{-1})$
                    sufficiently small, $\Hel(\theta)-z$ is invertible
                    on $\ran \bar P_{disc}(\theta)$ and
                    $$\left\|(\bar P_{disc}(\theta)\Hel(\theta)\bar P_{disc}(\theta)-z)^{-1}\bar P_{disc}(\theta)\right\|\leq 2(\Sigma-\eta-\Re z)^{-1}$$
                \end{lemmaA}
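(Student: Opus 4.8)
The plan is to reduce everything to the self-adjoint situation at $\theta=0$ and then perturb in $\theta$. First I would record the $\theta=0$ case: $\Hel$ is self-adjoint, and by the very definition of $\overline{P}_{disc}(0)$ (it removes the eigenprojections $P_{el,i}(0)$ for all $E_i\le\Sigma-\eta$) together with $\Sigma=\inf\sigma_{ess}(\Hel)$, the restriction $\overline{P}_{disc}(0)\Hel\overline{P}_{disc}(0)$ to $\ran\overline{P}_{disc}(0)$ is self-adjoint with spectrum contained in $[\Sigma-\eta,\infty)$. Hence for $\Re z<\Sigma-\eta$ the spectral theorem gives invertibility together with
$$\big\|(\overline{P}_{disc}(0)\Hel\overline{P}_{disc}(0)-z)^{-1}\overline{P}_{disc}(0)\big\|\le\dist(z,[\Sigma-\eta,\infty))^{-1}\le(\Sigma-\eta-\Re z)^{-1},$$
i.e.\ the asserted bound with constant $1$.

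Next I would use that $\Hel(\theta)=e^{-2\theta}(-\Delta_{3N})+e^{-\theta}V$, where $V$ is the Coulomb interaction (homogeneous of degree $-1$ under dilations), so that
$$\Hel(\theta)-\Hel(0)=(e^{-2\theta}-1)(-\Delta_{3N})+(e^{-\theta}-1)V .$$
Since $V$ is infinitesimally $(-\Delta_{3N})$-bounded by Kato's theorem, both $-\Delta_{3N}$ and $V$ are $\Hel(0)$-bounded, so $\Hel(\theta)-\Hel(0)$ is $\Hel(0)$-bounded with relative bound $\cO(|\theta|)$ as $|\theta|\to0$; similarly the Riesz projections depend analytically on $\theta$, so $\overline{P}_{disc}(\theta)-\overline{P}_{disc}(0)=\cO(|\theta|)$ in norm and there is a standard isomorphism $G(\theta)$, with $\|G(\theta)-1\|=\cO(|\theta|)$, from $\ran\overline{P}_{disc}(0)$ onto $\ran\overline{P}_{disc}(\theta)$. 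I would then write
$$\overline{P}_{disc}(\theta)\Hel(\theta)\overline{P}_{disc}(\theta)-z=\overline{P}_{disc}(0)\Hel\overline{P}_{disc}(0)-z+R_\theta ,$$
with $R_\theta$ collecting all the $\cO(|\theta|)$-corrections (including the change of range via $G(\theta)$), factor out the resolvent from the $\theta=0$ step, and sum the resulting Neumann series. Convergence is guaranteed once $\|R_\theta(\overline{P}_{disc}(0)\Hel\overline{P}_{disc}(0)-z)^{-1}\overline{P}_{disc}(0)\|\le\tfrac12$; inserting the $\Hel(0)$-relative bound of $R_\theta$ and the $\theta=0$ resolvent bound turns this precisely into the hypothesis that $|\theta|(1+(\Sigma-\eta-\Re z)^{-1})$ be small, while the geometric factor $\sum_{n\ge0}2^{-n}=2$ upgrades the constant $1$ to $2$.

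The step I expect to be the main obstacle is making the $\Hel(0)$-relative bound of $R_\theta$ uniform in $z$ while $\Re z$ is allowed to tend to $-\infty$ — which is exactly the regime in which this lemma is invoked in the proof of Lemma \ref{Lm:QBound}, with $z$ replaced by $z-e^{-i\vartheta}|k|$. A careless estimate of $\|(-\Delta_{3N})(\overline{P}_{disc}(0)\Hel\overline{P}_{disc}(0)-z)^{-1}\overline{P}_{disc}(0)\|$ produces a spurious factor $|z|$; this is removed by using that $P_{disc}(0)(-\Delta_{3N})$ is bounded (the finitely many eigenfunctions are smooth and exponentially decaying, so $-\Delta_{3N}$ applied to them stays in $L^2$), so that on $\ran\overline{P}_{disc}(0)$ one has $(-\Delta_{3N})\psi=\overline{P}_{disc}(0)(-\Delta_{3N})\psi+\cO(\|\psi\|)$, the remaining term being controlled through $\overline{P}_{disc}(0)(\Hel-z)$ and Kato-boundedness of $V$, and by the accretivity of $e^{-2\theta}(-\Delta_{3N})=e^{-2i\vartheta}(-\Delta_{3N})$ for small $\vartheta$, which keeps the numerical range of $\Hel(\theta)$ on one side of the line $\Re=\Sigma-\eta$ and prevents any cancellation in $\Hel(\theta)-z$ when $\Re z$ is large and negative. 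This bookkeeping is the lengthy computation carried out in \cite[Lemma 3.8]{Bachetal1999S}, which we do not reproduce here.
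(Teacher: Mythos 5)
Your proposal is a correct expansion of exactly the argument the paper sketches in the sentence following the lemma (self-adjoint spectral bound with constant $1$ at $\theta=0$, plus $\Hel(0)$-relative boundedness of $\Hel(\theta)-\Hel(0)$ of order $|\theta|$, plus a Neumann series that degrades the constant to $2$), and your reading of the hypothesis $|\theta|(1+(\Sigma-\eta-\Re z)^{-1})$ small as precisely what makes the perturbative series converge is the right one. One small remark on your worry about a spurious $|z|$: it does not actually arise, since by the spectral theorem $\|H_0(H_0-z)^{-1}\|=\sup_{\lambda\ge\Sigma-\eta}|\lambda|/|\lambda-z|\le 1+|\Re z|/(\Sigma-\eta-\Re z)$, uniformly in $\Im z$; combined with $|\Re z|\le|\Sigma-\eta|+(\Sigma-\eta-\Re z)$ this is exactly absorbed by the stated smallness hypothesis, so the $P_{disc}(-\Delta)$ decomposition you invoke is not strictly needed, though it is a harmless alternate route to the same bound.
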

                \noindent This lemma is proved by using that the estimate
                holds for $\theta=0$  with constant one (instead of
                two)
                and using that $\Hel(0)-\Hel(\theta)$ is relatively
                $\Hel(0)$ bounded.
                We remind the reader that as in \cite{Bachetal1999S} we define
                $B_\theta(\rho):=\Hel(\theta)\otimes \idf-E_j+e^{-\theta}(\idel\otimes
                H_f+\rho)$. Note that
                $\cA\subset \rho(\PObar(\theta)H_0(\theta))$.
                \begin{lemmaA}[\cite{Bachetal1999S}, Lemma 3.11]\label{Lm:BachBound3}
                     There exists a constant  $C>0$ such that for $0<\vartheta<\theta_0$, for all $g$ with $0\leq
                        g\rho_0^{-1/2}\leq 1/3$ and $0<\rho_0\leq
                        (\delta/3)\sin\vartheta$, and for all $z\in \cA$
                        \begin{equation}\label{Lm:BachBound3:Eq:1}
                            \|B_\theta(\rho_0) \frac{\PObar(\theta)}{H_0(\theta)-z} \|\leq \frac{C}{\vartheta}.
                        \end{equation}
                \end{lemmaA}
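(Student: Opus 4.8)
The plan is to split off one copy of the resolvent from $B_\theta(\rho_0)$ and then reduce the estimate to fibrewise bounds in the spectral variable of $H_f$, the electronic factor being controlled by Lemma~\ref{Lm:BachBound2}.

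First I would use $B_\theta(\rho_0)=\Hel(\theta)\otimes\idf-E_j+e^{-\theta}(\idel\otimes H_f+\rho_0)=(H_0(\theta)-z)+(z-E_j+e^{-\theta}\rho_0)$, so that
\[
B_\theta(\rho_0)\,\frac{\PObar(\theta)}{H_0(\theta)-z}=\PObar(\theta)+(z-E_j+e^{-\theta}\rho_0)\,\frac{\PObar(\theta)}{H_0(\theta)-z}.
\]
The first term has norm at most one, so only the second remains. When $\Im z$ is large (say $\Im z\geq1$) this is immediate: since $\idel\otimes H_f\geq0$ and $|\theta|=\vartheta$ is small, the numerical range of $H_0(\theta)$ is contained in a half plane $\{\Im w\leq C\vartheta\}$, whence $\|(H_0(\theta)-z)^{-1}\|\leq(\Im z-C\vartheta)^{-1}$; this decays like $(\Im z)^{-1}$ and absorbs the at most linear growth of $|z-E_j+e^{-\theta}\rho_0|$.

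For $z\in\cA$ with $\Im z$ bounded I would decompose $\ran\PObar(\theta)$ according to $\PObar(\theta)=\overline{P}_{disc}(\theta)\otimes\idf+\big(\textstyle\sum_{i\neq j,\,E_i\leq\Sigma-\eta}P_i(\theta)\big)\otimes\idf+\Pel(\theta)\otimes\chi_{H_f>\rho_0}$ and estimate $(z-E_j+e^{-\theta}\rho_0)(H_0(\theta)-z)^{-1}$ on each summand, always using the spectral representation of $H_f$ to replace $\idel\otimes H_f$ by a scalar $r$ (with $r\geq0$ on the first two pieces, $r>\rho_0$ on the last). On $\overline{P}_{disc}(\theta)\otimes\idf$ I would apply Lemma~\ref{Lm:BachBound2} with $w:=z-e^{-\theta}r$: since $\Re w=\Re z-r\cos\vartheta\leq E_j+\delta/2<\Sigma-\eta$ for all $r\geq0$ one gets $\|(\Hel(\theta)-w)^{-1}\overline{P}_{disc}(\theta)\|\leq2(\Sigma-\eta-\Re z+r\cos\vartheta)^{-1}$, which multiplied by the now bounded numerator is of order one. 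On the finitely many pieces $P_i(\theta)\otimes\idf$ with $i\neq j$, $H_0(\theta)$ acts as multiplication by $E_i+e^{-\theta}r$; because $|E_i-\Re z|\geq\delta/2$, the only fibres in which $|E_i+e^{-\theta}r-z|$ is small are those with $r\cos\vartheta$ comparable to $\Re z-E_i\geq\delta/2$, and there $|E_i+e^{-\theta}r-z|\geq|r\sin\vartheta+\Im z|\geq c\,\vartheta$ — this is where the factor $\vartheta^{-1}$ in the statement originates. Finally on $\Pel(\theta)\otimes\chi_{H_f>\rho_0}$, where $H_0(\theta)$ acts as $E_j+e^{-\theta}r$ with $r>\rho_0$, I would rewrite $z-E_j+e^{-\theta}\rho_0=-(E_j+e^{-\theta}r-z)+e^{-\theta}(r+\rho_0)$, reducing the contribution to $-\Pel(\theta)\otimes\chi_{H_f>\rho_0}+e^{-\theta}(r+\rho_0)(E_j+e^{-\theta}r-z)^{-1}$, and bound the last factor by $(r+\rho_0)\,|r\sin\vartheta+\Im z|^{-1}\leq C\vartheta^{-1}$; here the smallness assumptions $\rho_0\leq(\delta/3)\sin\vartheta$ and $g\rho_0^{-1/2}\leq1/3$ together with $\Im z\geq-g^{2-\epsilon}$ are what keep $r\sin\vartheta+\Im z$ comparable to $r\sin\vartheta$ for $r\geq\rho_0$, so that the resolvent on $\ran\PObar(\theta)$ genuinely exists and is uniformly controlled.

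The hard part is not any single fibre estimate but making all of them hold simultaneously and uniformly in $z\in\cA$: because $\Hel(\theta)$ is genuinely non-normal one cannot invoke a self-adjoint functional calculus but must combine Lemma~\ref{Lm:BachBound2} with numerical-range geometry, and one must carefully balance the small parameters $g$, $\rho_0$, $\vartheta$, $\eta$ so that the near-threshold photon fibres (which produce the unavoidable $\vartheta^{-1}$) and the regime $\Im z\to+\infty$ (where the numerator is unbounded) are both absorbed into the single bound $C/\vartheta$. The complete argument is carried out in \cite[Lemma~3.11]{Bachetal1999S}.
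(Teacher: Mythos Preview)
Your sketch is essentially correct and matches the ingredients the paper itself lists for this lemma (which it does not prove but only describes): Lemma~\ref{Lm:BachBound2} for the continuous part $\overline{P}_{disc}(\theta)$, the discrete spectral structure of $\Hel(\theta)$ on $\ran P_{disc}(\theta)$ (your fibrewise treatment of each $P_i(\theta)$ amounts to using that $\Hel(\theta)|_{\ran P_{disc}(\theta)}$ is similar to a self-adjoint operator), and the spectral theorem for $H_f$. Your preliminary algebraic identity $B_\theta(\rho_0)=(H_0(\theta)-z)+(z-E_j+e^{-\theta}\rho_0)$ is a clean reduction that the paper's summary does not mention explicitly; one minor slip is the claim that $\|\PObar(\theta)\|\leq1$, since $\Pel(\theta)$ is not an orthogonal projection for complex $\theta$, but uniform boundedness (for small $\vartheta$) is all you need there.
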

                \noindent The proof of Lemma \ref{Lm:BachBound3} is based on Lemma  \ref{Lm:BachBound2},
                the fact that $\Hel(\theta)$ restricted to $P_{disc}(\theta)$ is
                similar to a self-adjoint operator, and various
                other estimates on the resolvent of $\Hel(\theta)$ as well as the application of the spectral theorem for
                $H_f$.

                \noindent The following Corollary was used in
                \cite{Bachetal1998S}:
                \begin{corA}\label{Cor:BachBound1}
                    There exists a
                    constant $C>0$ such that for $0<\vartheta<\theta_0$, all $g$ with $0\leq
                    g\rho_0^{-1/2}\leq 1/3$ and $0<\rho_0\leq
                    (\delta/3)\sin\vartheta$,  and for all $z\in \cA$
                    $$\||B_\theta(\rho_0)|^{1/2} \frac{\PObar(\theta)}{H_0(\theta)-z} |B_{\bar{\theta}}(\rho_0)|^{1/2}\|\leq \frac{C}{\vartheta}.$$
                \end{corA}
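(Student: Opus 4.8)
\emph{Proof strategy.} The plan is to deduce Corollary~\ref{Cor:BachBound1} from Lemma~\ref{Lm:BachBound3} by a complex interpolation argument. Abbreviate $S:=\tfrac{\PObar(\theta)}{H_0(\theta)-z}$, so that Lemma~\ref{Lm:BachBound3} reads $\|B_\theta(\rho_0)\,S\|\leq C/\vartheta$ uniformly in $z\in\cA$; recall moreover that $B_\theta(\rho_0)=H_0(\theta)-(E_j-e^{-\theta}\rho_0)$ commutes with $\PObar(\theta)$ and hence with $S$, so that $S B_\theta(\rho_0)=B_\theta(\rho_0)S$, and that $B_{\bar\theta}(\rho_0)=B_\theta(\rho_0)^*$. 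I would obtain $\||B_\theta(\rho_0)|^{1/2}\,S\,|B_{\bar\theta}(\rho_0)|^{1/2}\|$ as the value at $\zeta=\tfrac12$ of the analytic family $\zeta\mapsto|B_\theta(\rho_0)|^{\zeta}\,S\,|B_{\bar\theta}(\rho_0)|^{1-\zeta}$ on the strip $0\leq\Re\zeta\leq1$, via Hadamard's three-lines theorem, the two boundary estimates being $\||B_\theta(\rho_0)|\,S\|\leq C/\vartheta$ at $\Re\zeta=1$ and $\|S\,|B_{\bar\theta}(\rho_0)|\|\leq C/\vartheta$ at $\Re\zeta=0$.

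First I would record that $B_\theta(\rho_0)$ and $B_{\bar\theta}(\rho_0)$ are boundedly invertible on $\hilbert$. Since, for $\theta=i\vartheta$, the essential spectrum of $\Hel(\theta)$ and the spectrum of $e^{-\theta}H_f$ lie in the closed lower half plane while the remaining (discrete) eigenvalues of $\Hel(\theta)$ are real, one has $\sigma(H_0(\theta))\subset\{w:\Im w\leq0\}$, whereas $\Im(E_j-e^{-\theta}\rho_0)=\rho_0\sin\vartheta>0$; hence $\dist(E_j-e^{-\theta}\rho_0,\sigma(H_0(\theta)))\geq\rho_0\sin\vartheta$, and by complex conjugation the analogous bound holds for $B_{\bar\theta}(\rho_0)=H_0(\bar\theta)-(E_j-e^{-\bar\theta}\rho_0)$. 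Consequently the positive self-adjoint operators $|B_\theta(\rho_0)|:=(B_\theta(\rho_0)^*B_\theta(\rho_0))^{1/2}$ and $|B_{\bar\theta}(\rho_0)|$ are bounded below by $\rho_0\sin\vartheta>0$, so their imaginary powers $|B_\theta(\rho_0)|^{it}$ and $|B_{\bar\theta}(\rho_0)|^{it}$ are unitary for every $t\in\R$.

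Next I would prove the two boundary estimates. From $|B_\theta(\rho_0)|^{2}=B_\theta(\rho_0)^*B_\theta(\rho_0)$ one gets $\||B_\theta(\rho_0)|S\psi\|^{2}=\|B_\theta(\rho_0)S\psi\|^{2}$ for every $\psi$, hence $\||B_\theta(\rho_0)|S\|=\|B_\theta(\rho_0)S\|\leq C/\vartheta$ by Lemma~\ref{Lm:BachBound3}. Similarly $\|S|B_{\bar\theta}(\rho_0)|\|=\||B_{\bar\theta}(\rho_0)|S^*\|=\|B_{\bar\theta}(\rho_0)S^*\|$, and taking adjoints, together with $B_{\bar\theta}(\rho_0)^*=B_\theta(\rho_0)$ and $S B_\theta(\rho_0)=B_\theta(\rho_0)S$, gives $\|B_{\bar\theta}(\rho_0)S^*\|=\|S B_\theta(\rho_0)\|=\|B_\theta(\rho_0)S\|\leq C/\vartheta$ by Lemma~\ref{Lm:BachBound3} once more.

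Finally, for fixed $z\in\cA$, a unit vector $\Phi$, and $\Psi\in\dom|B_{\bar\theta}(\rho_0)|$, I would put $F(\zeta):=\langle\Phi,|B_\theta(\rho_0)|^{\zeta}S|B_{\bar\theta}(\rho_0)|^{1-\zeta}\Psi\rangle$. Since $\ran S\subset\dom H_0(\theta)=\dom|B_\theta(\rho_0)|$ and $\lambda^{\sigma}\leq1+\lambda$ on $[0,\infty)$ for $0\leq\sigma\leq1$, the function $F$ is analytic on the open strip and bounded and continuous on its closure. Peeling off the unitary imaginary powers of $|B_\theta(\rho_0)|$ and $|B_{\bar\theta}(\rho_0)|$ shows $|F|\leq\||B_\theta(\rho_0)|S\|\leq C/\vartheta$ on the line $\Re\zeta=1$ and $|F|\leq\|S|B_{\bar\theta}(\rho_0)|\|\leq C/\vartheta$ on the line $\Re\zeta=0$; Hadamard's three-lines theorem then yields $|F(\tfrac12)|\leq C/\vartheta$, and a density argument gives $\||B_\theta(\rho_0)|^{1/2}S|B_{\bar\theta}(\rho_0)|^{1/2}\|\leq C/\vartheta$, which is the assertion. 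The point requiring the most care is the interpolation step, in particular the verification that $F$ is bounded on the closed strip so that Hadamard's theorem applies; this is exactly what the bounded invertibility of $|B_\theta(\rho_0)|$ and $|B_{\bar\theta}(\rho_0)|$ provides, since it makes the imaginary powers genuinely unitary rather than merely contractions on a subspace.
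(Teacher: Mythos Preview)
Your proof is correct and follows essentially the same route as the paper's own proof: derive the second boundary bound $\|S\,|B_{\bar\theta}(\rho_0)|\|\leq C/\vartheta$ from Lemma~\ref{Lm:BachBound3} by taking adjoints (you make this precise via the commutation $SB_\theta(\rho_0)=B_\theta(\rho_0)S$), and then conclude by complex interpolation. The paper compresses all of this into two sentences, while you spell out the invertibility of $B_\theta(\rho_0)$, the unitarity of the imaginary powers, and the three-lines argument; these are exactly the details one would want to check, so there is nothing to correct.
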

                \begin{proof}
                    By taking adjoints in equation \eqref{Lm:BachBound3:Eq:1}, we find that
                    $\| \frac{\PObar(\theta)}{H_0(\theta)-z} |B_{\bar{\theta}}(\rho_0)|\|\leq \frac{C}{\vartheta}.$
                    The claim follows by complex interpolation.
                \end{proof}

                \begin{lemmaA}[\cite{Bachetal1999S}, Lemma
                3.13]\label{Lm:BachBound1}
                      There is a constant $C>0$ such that for  $0<\vartheta<\theta_0$ sufficiently
                        small, $\theta_1,\theta_2\in \{\pm
                        i\vartheta\}$ and for all $\rho>0$
                        \begin{equation}\label{Lm:BachBound1:Eq:1}
                        \||B_{\theta_1}(\rho)|^{-1/2}W_g(\theta)|B_{\theta_2}(\rho)|^{-1/2}\|\leq
                        g
                        \frac{C}{\vartheta}(1+\rho^{-1/2}).
                        \end{equation}
                \end{lemmaA}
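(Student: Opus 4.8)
The plan is to reduce the weighted bound to the three pieces of the interaction $W_g(\theta)$ coming from the expression for $W_g$ after dilation, namely the $A_\kappa\cdot p$ term, the normal-ordered $:A_\kappa^2:$ term, and the $\sigma\cdot(\nabla\times A_\kappa)$ term, and to bound each of them in the $\||B_{\theta_1}(\rho)|^{-1/2}\,\cdot\,|B_{\theta_2}(\rho)|^{-1/2}\|$ norm separately. First I would write $W_g(\theta)$ as a sum of Wick-ordered monomials $w^{(\theta)}_{m,n}$ with $m+n\le 2$, each built from the coupling functions $\G{0}{1}{\theta}{k}$ and $\G{1}{0}{\theta}{k}$ of \eqref{Eq:CoupDef}--\eqref{Eq:CoupDef2}; the key structural point is that each creation/annihilation operator in such a monomial can be paired with a factor of $|B_{\theta_i}(\rho)|^{-1/2}$, and on the relevant spectral subspace $|B_{\theta_i}(\rho)|\gtrsim \sin\vartheta\,(H_f+\rho)$ (up to the electronic gap), so that the weight controls one power of $\sqrt{H_f+\rho}$ per photon leg.

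The core estimate is then the standard $N_\tau$-type bound: for an operator of the form $a^\#(f)$ one has $\|\,(H_f+\rho)^{-1/2}a^*(f)\,\|\le \|f/\sqrt{|k|}\|_{L^2}$ and similarly for $a(f)$, with an analogous bound for the two-photon ($A^2$) term using two weights. Here $f$ is, schematically, the coupling function divided by the requisite power of $|B_{\theta_i}(\rho)|$; because $G^{(\theta)}_x(k,\mu)$ behaves like $|k|^{-1/2}\kappa(e^{-\theta}|k|)$ and $B^{(\theta)}_x(k,\mu)$ like $|k|^{1/2}\kappa(e^{-\theta}|k|)$, the $L^2$ norms $\|G^{(\theta)}/\sqrt{|k|}\|$, $\|B^{(\theta)}/\sqrt{|k|}\|$, and the combination needed for the $A^2$ term, are finite thanks to the decay of $\kappa$ and are bounded uniformly for $|\vartheta|<\theta_0$. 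The momentum factor $p_j$ in $\G{0}{1}{\theta}{k}$ is handled by the relative bound $\|p_j\,|B_{\theta_i}(\rho)|^{-1/2}\|\le C$, which follows from $\Hel(\theta)$ being relatively $(-\Delta)$-bounded (hence $p_j^2\lesssim \Hel(\theta)+C$) together with the definition of $B_{\theta_i}(\rho)$; this is where the electronic part of the weight is used rather than the photon part. Collecting powers, the one-photon monomials contribute $g\,C/\vartheta$ (one inverse-$\sqrt\rho$ is never needed for them once a weight is spent, but the $1/\vartheta$ and the possible $\rho^{-1/2}$ appear from distributing a single weight $|B_{\theta_i}(\rho)|^{-1/2}\sim (\sin\vartheta)^{-1/2}(H_f+\rho)^{-1/2}$ over the two available slots) and the two-photon $A^2$ monomial contributes the same order; summing gives the stated $g\,\tfrac{C}{\vartheta}(1+\rho^{-1/2})$.

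The main obstacle is the bookkeeping of the weights: one must check that in every monomial there are at least as many factors $|B_{\theta_i}(\rho)|^{-1/2}$ available as there are creation and annihilation operators that need taming, and that whichever weight is left over after the photon legs are controlled can still be used to absorb the electronic momentum $p_j$ and the electronic gap, without ever needing a negative power of a weight. For the $A^2$ term, which carries two photon operators but the bound supplies only the two external weights $|B_{\theta_1}(\rho)|^{-1/2}$ and $|B_{\theta_2}(\rho)|^{-1/2}$, this is exactly tight and forces the extra $\rho^{-1/2}$: one weight is split, in effect, as $(H_f+\rho)^{-1/2}\le \rho^{-1/2}$ on part of the estimate. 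Since the detailed combinatorics is precisely what is carried out in \cite[Lemma 3.13]{Bachetal1999S}, I would at this point simply invoke that argument, noting that nothing in it is affected by our slightly modified choice of contour.
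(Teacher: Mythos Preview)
Your strategy---decompose $W_g(\theta)$ into Wick monomials, bound each photon leg via the standard estimates $\|A_\kappa(x)_-\psi\|\le C\|H_f^{1/2}\psi\|$ and $\|A_\kappa(x)_+\psi\|\le C\|(H_f+1)^{1/2}\psi\|$, and absorb $p_j$ through the electronic part of the weight---is exactly the sketch the paper gives (the paper does not prove the lemma either; it only describes these ingredients and defers to \cite[Lemma~3.13]{Bachetal1999S}).

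One heuristic discrepancy is worth flagging: you locate the $\rho^{-1/2}$ exclusively in the $:A_\kappa^2:$ term, arguing that the two external weights are ``exactly tight'' there. The paper instead traces the $\rho^{-1/2}$ to the $+1$ in the creation-operator bound and to the additive constant in the relative bound for $\Hel(0)-\Hel(\theta)$. Concretely, already for a single creation operator one has
\[
\|a^*(G)(H_f+\rho)^{-1/2}\|\ \le\ \|G/\sqrt{|k|}\|\;+\;\|G\|\,\rho^{-1/2},
\]
so the $\rho^{-1/2}$ appears in the one-photon pieces as well, contrary to your parenthetical claim that for those ``one inverse-$\sqrt\rho$ is never needed''; likewise your uniform bound $\|p_j\,|B_{\theta_i}(\rho)|^{-1/2}\|\le C$ is not correct as stated, because the additive constant in the $p_j^2$-vs-$\Hel$ relative bound again produces a $\rho^{-1/2}$. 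Since you (like the paper) ultimately defer the detailed bookkeeping to the cited reference, this does not invalidate the argument, but the mechanism you describe for the origin of the $\rho^{-1/2}$ is misplaced.
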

                \noindent The proof of Lemma \ref{Lm:BachBound1}  uses that $ \|A_{\kappa}(x)_-\psi\|\leq C \|H_f^{1/2}\psi\|$ and
                $ \|A_{\kappa}(x)_+\psi\|\leq C
                \|(H_f+1)^{1/2}\psi\|$ for some $0<C$ and all $\psi$
                in the domain of $H_f^{1/2}$, that $\Hel(0)-\Hel(\theta)$ is relatively
                $\Hel(0)$ bounded, and some other estimates. The
                term proportional to $\rho^{-1/2}$ is  due to the $+1$ in
                the bound for the creation operator
                $A_{\kappa}(x)_+$ and to the appearance of a similar constant in the estimate for
                the relative boundedness of $\Hel(0)-\Hel(\theta)$.
                Note that the symmetric form of the estimate
                \eqref{Lm:BachBound1:Eq:1} is essential. Estimates on
                $W_g(\theta)|B_{\theta_2}(\rho)|^{-1}$lead to  worse behavior as $\rho\rightarrow 0$.

                \begin{lemmaA}[\cite{Bachetal1999S}, Lemma
                3.14]\label{Lm:BachBound4}
                    There is a $C>0$ such that
                    for $\vartheta\in(0,\theta_0)$,
                    $\rho_0<(\delta/3)\sin\vartheta$, $0<g\rho_0^{-1/2}\ll\vartheta^2$,  and for all $z\in \cA$
                    the operator $\PObar(\theta) H_g
                    \PObar(\theta) -z$ is invertible on $\ran \PObar(\theta)$ and fulfills
                    $$\|[\PObar(\theta) H_g(\theta)
                    \PObar(\theta) -z]^{-1}\PObar(\theta)\|\leq \frac{C}{\vartheta \rho_0}.$$
                \end{lemmaA}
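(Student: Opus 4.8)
The plan is to invert $\PObar(\theta)(H_g(\theta)-z)\PObar(\theta)$ on $\ran\PObar(\theta)$ by a Neumann expansion about the free reduced resolvent, in the spirit of \cite{Bachetal1999S}. Since $\PObar(\theta)$ commutes with $H_0(\theta)$ and $\cA\subset\rho(\PObar(\theta)H_0(\theta))$, the operator $R_0:=[\PObar(\theta)(H_0(\theta)-z)\PObar(\theta)]^{-1}\PObar(\theta)$ is well defined, and I would write
$$[\PObar(\theta)(H_g(\theta)-z)\PObar(\theta)]^{-1}\PObar(\theta)=\sum_{n=0}^\infty(-1)^n\,R_0\big(W_g(\theta)R_0\big)^n,$$
the $n$-th summand being the alternating product $R_0W_g(\theta)R_0\cdots W_g(\theta)R_0$ with $n$ factors $W_g(\theta)$ and $n+1$ factors $R_0$. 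Everything then comes down to bounding the $n$-th summand by $\tfrac{C}{\vartheta\rho_0}(Cw/\vartheta)^n$ for a suitable $w$ with $Cw/\vartheta\leq 1/2$ and summing a geometric series; convergence of that series simultaneously yields invertibility, validity of the expansion, and the desired bound $C/(\vartheta\rho_0)$.

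I would assemble the required building blocks from the appendix. First, the purely free bound $\|R_0\|\leq C/(\vartheta\rho_0)$: on the summand $P_{el,j}(\theta)\otimes\chi_{H_f>\rho_0}$ of $\PObar(\theta)$ one simply has $|e^{-\theta}H_f+E_j-z|\geq|\Im(e^{-\theta}H_f+E_j-z)|\geq\sin\vartheta\,H_f-g^{2-\epsilon}\geq\tfrac12\sin\vartheta\,\rho_0$, using that $\rho_0=g^{2-2\epsilon}$ dominates the $g^{2-\epsilon}$ bounding $-\Im z$; on the summand $\Pelbar(\theta)\otimes\idf$ the resolvent is bounded uniformly in $z\in\cA$ by Lemma \ref{Lm:BachBound2} together with the fact that the discrete eigenvalues $E_i$, $i\neq j$, stay a distance $\geq\delta$ from $[E_j-\delta/2,E_j+\delta/2]$. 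Interpolating $\|R_0\|\leq C/(\vartheta\rho_0)$ with the bound $\|R_0|B_{\bar\theta}(\rho_0)|\|\leq C/\vartheta$ obtained in the proof of Corollary \ref{Cor:BachBound1} (and, symmetrically, with Lemma \ref{Lm:BachBound3} itself) gives $\|R_0|B_{\bar\theta}(\rho_0)|^{1/2}\|$ and $\||B_\theta(\rho_0)|^{1/2}R_0\|$ both $\leq C/(\vartheta\rho_0^{1/2})$. The remaining inputs are off the shelf: $\||B_\theta(\rho_0)|^{1/2}R_0|B_{\bar\theta}(\rho_0)|^{1/2}\|\leq C/\vartheta$ from Corollary \ref{Cor:BachBound1}, and $w:=\||B_{\bar\theta}(\rho_0)|^{-1/2}W_g(\theta)|B_\theta(\rho_0)|^{-1/2}\|\leq g\,C\vartheta^{-1}(1+\rho_0^{-1/2})$ from Lemma \ref{Lm:BachBound1}.

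To estimate the $n$-th summand with $n\geq1$, I would insert compensating pairs $|B_\theta(\rho_0)|^{\pm1/2}$ and $|B_{\bar\theta}(\rho_0)|^{\pm1/2}$ between consecutive factors, arranged so that each $W_g(\theta)$ is flanked by $|B_{\bar\theta}(\rho_0)|^{-1/2}$ on the left and $|B_\theta(\rho_0)|^{-1/2}$ on the right, each of the $n-1$ interior factors $R_0$ is flanked by $|B_\theta(\rho_0)|^{1/2}$ and $|B_{\bar\theta}(\rho_0)|^{1/2}$, and the two outermost factors $R_0$ carry a single $|B(\rho_0)|^{1/2}$ on their inner side only; multiplying the five bounds above then gives the summand-norm $\leq\big(C/(\vartheta\rho_0^{1/2})\big)^2(C/\vartheta)^{n-1}w^n=\tfrac{C}{\vartheta\rho_0}(Cw/\vartheta)^n$, with the $n=0$ term handled directly by $\|R_0\|\leq C/(\vartheta\rho_0)$. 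Summing yields $\|[\PObar(\theta)(H_g(\theta)-z)\PObar(\theta)]^{-1}\PObar(\theta)\|\leq\tfrac{C}{\vartheta\rho_0}\sum_{n\geq0}(Cw/\vartheta)^n$, and since $Cw/\vartheta\leq C^2g\vartheta^{-2}(1+\rho_0^{-1/2})\leq\tfrac12$ under the hypothesis $g\rho_0^{-1/2}\ll\vartheta^2$, the series converges (which justifies the expansion and the invertibility) to something $\leq2C/(\vartheta\rho_0)$. I expect the only genuine difficulty to be the bookkeeping of the alternating $\theta/\bar\theta$ factorizations --- unavoidable since $W_g(\theta)$ is not symmetric and Lemmas \ref{Lm:BachBound3}--\ref{Lm:BachBound1} pair $B_\theta$ with $B_{\bar\theta}$ --- together with the $\rho_0$ power counting that makes it work: only the two unbalanced endpoint resolvents pick up a factor $\rho_0^{-1/2}$, while the interior ones pick up no negative power of $\rho_0$, and this is precisely what produces the final $\rho_0^{-1}$ rather than a worse power; this in turn hinges on the \emph{symmetric} $\rho^{-1/2}$-form of Lemma \ref{Lm:BachBound1}, as emphasized in the remark following that lemma.
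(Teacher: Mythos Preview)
Your proposal is correct and follows essentially the same route the paper indicates: a Neumann expansion of $[\PObar(\theta)(H_g(\theta)-z)\PObar(\theta)]^{-1}\PObar(\theta)$ about the free reduced resolvent, with the symmetric $|B_\theta(\rho_0)|^{1/2}$, $|B_{\bar\theta}(\rho_0)|^{1/2}$ insertions controlled by Corollary~\ref{Cor:BachBound1} and Lemma~\ref{Lm:BachBound1}, summed as a geometric series under the smallness hypothesis $g\rho_0^{-1/2}\ll\vartheta^2$. The only cosmetic difference is at the endpoints: the factorization displayed in \eqref{EQ:Lm:NumRanContr:Pr:2} pulls out bare factors $|B_\theta(\rho_0)|^{-1/2}$ and $|B_{\bar\theta}(\rho_0)|^{-1/2}$ (each of size $\cO(\rho_0^{-1/2})$) and uses the symmetric bound of Corollary~\ref{Cor:BachBound1} on \emph{every} $R_0$, whereas you instead obtain one-sided bounds $\|R_0|B_{\bar\theta}(\rho_0)|^{1/2}\|,\ \||B_\theta(\rho_0)|^{1/2}R_0\|\leq C/(\vartheta\rho_0^{1/2})$ by interpolating $\|R_0\|\leq C/(\vartheta\rho_0)$ against Lemma~\ref{Lm:BachBound3}; both bookkeeping schemes produce the same $\rho_0^{-1}$ in the final estimate.
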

                \noindent The proof of Lemma \ref{Lm:BachBound4}  uses Corollary \ref{Cor:BachBound1}, Lemma
                \ref{Lm:BachBound1} and a Neumann series expansion.

                \begin{lemmaA}[\cite{Bachetal1999S}, Lemma
                3.15]\label{Lm:FeshExist}
                    Assume that
                    $\vartheta\in(0,\theta_0)$. Let $\rho_0<(\delta/3)\sin\vartheta$ and  $0<g\rho_0^{-1/2}\ll\vartheta^2$.
                      Then for all $z\in
                    \cA$ the Feshbach operator $\cF$ defined in equation
                    \eqref{Eq:Fesh} exists. If $z\in \cA$, then  $H_g(\theta)-z$ is bounded invertible if
                     and only if the Feshbach operator $\cF$ is bounded
                     invertible, and
                    the equation
                    \begin{multline}\label{Lm:FeshExist:Eq:1}
                        (H_g(\theta)-z)^{-1}=[\PO(\theta)-\PObar(\theta)(\PObar(\theta) H_g(\theta)
                    \PObar(\theta) -z)^{-1}\PObar(\theta)W_g(\theta) \PO(\theta)]\\
                    \times \cF^{-1} [\PO(\theta) -\PO(\theta)W_g(\theta) \PObar(\theta)(\PObar(\theta) H_g(\theta)
                    \PObar(\theta) -z)^{-1}\PObar(\theta)]\\+\PObar(\theta)[\PObar(\theta) H_g(\theta)
                    \PObar(\theta) -z]^{-1}\PObar(\theta)
                    \end{multline}
                    holds, where the left side exists if and only if
                    the right side exists.
                    Moreover, there is a constant $C>0$, independent of
                    $g$ and $\theta$, such that for all $z\in \cA$
                    \begin{equation}\label{Lm:FeshExist:Eq:2}
                    \|(\PObar(\theta) H_g(\theta)
                    \PObar(\theta) -z)^{-1}\PObar(\theta)W_g(\theta) \PO(\theta)\|\leq \frac{C g}{\vartheta
                    \rho_0^{1/2}}\end{equation}
                    and
                    \begin{equation}\label{Lm:FeshExist:Eq:3}
                    \|\PO(\theta)W_g(\theta) \PObar(\theta)(\PObar(\theta) H_g(\theta)
                    \PObar(\theta) -z)^{-1}\|\leq \frac{C g}{\vartheta
                    \rho_0^{1/2}}.\end{equation}
                \end{lemmaA}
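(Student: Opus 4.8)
The plan is to view the statement as the standard instance of the Feshbach (Schur complement) formalism of Bach, Fr\"ohlich and Sigal \cite{Bachetal1998Q,Bachetal1999S}, applied to the complementary projections $\PO(\theta),\PObar(\theta)$ and the operator $H_g(\theta)-z$. Writing $H_g(\theta)-z$ in $2\times2$ block form with respect to $\hilbert=\ran\PO(\theta)\oplus\ran\PObar(\theta)$, one recognizes $\cF$ as the Schur complement of the lower-right block $\PObar(\theta)(H_g(\theta)-z)\PObar(\theta)$. So the first step is to show that this block is boundedly invertible on $\ran\PObar(\theta)$ for all $z\in\cA$; under the hypotheses $\rho_0<(\delta/3)\sin\vartheta$ and $0<g\rho_0^{-1/2}\ll\vartheta^2$ this is exactly Lemma \ref{Lm:BachBound4}, which moreover supplies $\|[\PObar(\theta)H_g(\theta)\PObar(\theta)-z]^{-1}\PObar(\theta)\|\le C(\vartheta\rho_0)^{-1}$. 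Granted this, $\cF$ is well defined on $\ran\PO(\theta)$ for every $z\in\cA$, which is the first assertion.

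Next I would establish the isospectrality and the reconstruction formula \eqref{Lm:FeshExist:Eq:1}. Given the invertibility of the lower-right block together with the relative boundedness of the off-diagonal pieces $\PO(\theta)W_g(\theta)\PObar(\theta)$ and $\PObar(\theta)W_g(\theta)\PO(\theta)$ (which holds here by Lemma \ref{Lm:BachBound1}), the abstract Feshbach property of \cite{Bachetal1998Q,Bachetal1999S} gives that $H_g(\theta)-z$ is bounded invertible on $\hilbert$ if and only if $\cF$ is bounded invertible on $\ran\PO(\theta)$, and that when both inverses exist they are related by the block-inversion identity, which is precisely \eqref{Lm:FeshExist:Eq:1}. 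One can also verify \eqref{Lm:FeshExist:Eq:1} directly, by multiplying out and using the operator identity $\PObar(\theta)(H_g(\theta)-z)=\PObar(\theta)(H_g(\theta)-z)\PObar(\theta)+\PObar(\theta)W_g(\theta)\PO(\theta)$ (note $\PObar(\theta)H_0(\theta)\PO(\theta)=0$) and the definition of $\cF$; the equivalence of invertibility then falls out of that formula and its analogue with the two blocks interchanged.

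For the quantitative bounds \eqref{Lm:FeshExist:Eq:2} and \eqref{Lm:FeshExist:Eq:3} the plan is to flank the lower-right resolvent by the weights $|B_\theta(\rho_0)|^{\pm1/2}$. A Neumann expansion of $[\PObar(\theta)H_g(\theta)\PObar(\theta)-z]^{-1}$ around $H_0(\theta)$, combined with Corollary \ref{Cor:BachBound1} and Lemma \ref{Lm:BachBound1}, yields $\||B_\theta(\rho_0)|^{1/2}[\PObar(\theta)H_g(\theta)\PObar(\theta)-z]^{-1}\PObar(\theta)|B_{\bar\theta}(\rho_0)|^{1/2}\|\le C\vartheta^{-1}$, the series converging because $g\rho_0^{-1/2}\ll\vartheta^2$. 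Then, inserting $|B_\theta(\rho_0)|^{1/2}|B_\theta(\rho_0)|^{-1/2}$ on either side of the remaining factor $W_g(\theta)$, bounding $\||B_{\bar\theta}(\rho_0)|^{-1/2}W_g(\theta)|B_\theta(\rho_0)|^{-1/2}\|\le Cg\vartheta^{-1}(1+\rho_0^{-1/2})$ by Lemma \ref{Lm:BachBound1}, and using that $\||B_\theta(\rho_0)|^{1/2}\PO(\theta)\|=\cO(\rho_0^{1/2})$ (since $0\le H_f\le\rho_0$ on $\ran\chiH$ and $\Hel(\theta)\Pel(\theta)=E_j\Pel(\theta)$) while $\||B_\theta(\rho_0)|^{-1/2}\PObar(\theta)\|=\cO(\rho_0^{-1/2})$ (since $\rho_0<(\delta/3)\sin\vartheta$), a count of powers of $g$ and $\rho_0$ gives \eqref{Lm:FeshExist:Eq:2}; estimate \eqref{Lm:FeshExist:Eq:3} then follows by taking adjoints, i.e. by replacing $\theta$ with $\bar\theta$.

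The hard part will be the $\rho_0\to0$ bookkeeping. The crude resolvent bound of Lemma \ref{Lm:BachBound4} carries $\rho_0^{-1}$, and the creation-operator estimate underlying Lemma \ref{Lm:BachBound1} contributes a further $\rho_0^{-1/2}$; the reason the final bounds improve to $\rho_0^{-1/2}$ is structural — $W_g(\theta)$ appears quadratically in $\cF$, once on each side of the $\PObar(\theta)$-resolvent, so the symmetric $|B_\theta(\rho_0)|^{\pm1/2}$-sandwiching can route the dangerous weights onto the outer $\PO(\theta)$-projections, where $H_f\le\rho_0$ tames them. Making this precise with constants uniform in $z\in\cA$ is the delicate point. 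The only other technical issue is that the Neumann series above converges a priori only on the part of $\cA$ with $\Im z$ large; as in the proof of Lemma \ref{Lm:NumRanContr}, all identities and estimates are then extended to the whole of $\cA$ by holomorphic continuation.
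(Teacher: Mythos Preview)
Your proposal is correct and follows essentially the same route as the paper: the paper's comment after the lemma states that \eqref{Lm:FeshExist:Eq:2} and \eqref{Lm:FeshExist:Eq:3} are proved by the same Neumann-series-plus-$|B_\theta(\rho_0)|^{\pm 1/2}$-weights technique as Lemma~\ref{Lm:BachBound4}, and that these bounds together with Lemma~\ref{Lm:BachBound4} yield the existence of $\cF$ and the reconstruction formula \eqref{Lm:FeshExist:Eq:1} via the abstract Feshbach theorem of \cite[Theorem~IV.1]{Bachetal1998Q}. Your observation that the outer $\PO(\theta)$-projections absorb the dangerous $\rho_0$-weights is exactly the paper's remark that the flanking $W_g(\theta)$'s reduce the $\rho_0\to 0$ divergence relative to Lemma~\ref{Lm:BachBound4}.
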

                \noindent Equations \eqref{Lm:FeshExist:Eq:2} and
                \eqref{Lm:FeshExist:Eq:3} are proved similarly as
                Lemma \ref{Lm:BachBound4}. Together with Lemma \ref{Lm:BachBound4} they imply the existence
                of the Feshbach operator and the validity of
                Equation \eqref{Lm:FeshExist:Eq:1} (see
                \cite[Theorem IV.1]{Bachetal1998Q}). Note that the
                operator $W_g(\theta)$ in Formulas
                \eqref{Lm:FeshExist:Eq:2} and
                \eqref{Lm:FeshExist:Eq:3} reduces the divergence as
                $\rho_0\rightarrow 0$ in comparison to Lemma
                \ref{Lm:BachBound4}.

                \subsection{Approximations of the Feshbach Operator}

                The following lemma gives an approximation of
                the Feshbach operator globally for all $z\in \cA$ (see \cite{Bachetal1999S},
                Lemma 3.16, estimates on $Rem_0$ through $Rem_3$):
                \begin{lemmaA}\label{Lm:QAppr}
                    Let $0<\epsilon<1/3$ and $0<\vartheta<\theta_0$.
                    Then there is a constant $C\geq 0$ such that for all
                    $g>0$ sufficiently small  with
                    $\rho_0<(\delta/3)\ sin\vartheta$, and for all
                    $z\in \cA$
                    $$\|[\cF-(E_j-z+e^{-\theta}\idel\otimes H_f-g^2 Q^{(\theta)}(z))]\PO(\theta) \|\leq C g^{2+\epsilon}.$$
                    Moreover
                    $\|\PO(\theta)W_g(\theta)\PO(\theta)\|=\cO(g^{2+\epsilon}).$
                \end{lemmaA}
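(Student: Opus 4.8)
The plan is to reproduce the argument of \cite[Lemma 3.16]{Bachetal1999S} in the present notation, exhibiting the cancellation explicitly. The first step is purely algebraic. Writing $H_g(\theta)=H_0(\theta)+W_g(\theta)$ with $H_0(\theta)=\Hel(\theta)\otimes\idf+e^{-\theta}\idel\otimes H_f$, and using that $\Hel(\theta)$ reduces to the (real) eigenvalue $E_j$ on $\ran\Pel(\theta)$ while $H_f$ commutes with $\chiH$, one gets $\PO(\theta)(H_0(\theta)-z)\PO(\theta)=(E_j-z+e^{-\theta}\idel\otimes H_f)\PO(\theta)$. Substituting this into the definition \eqref{Eq:Fesh} of $\cF$ yields
\begin{multline*}
\cF-\bigl(E_j-z+e^{-\theta}\idel\otimes H_f-g^2Q^{(\theta)}(z)\bigr)\PO(\theta)\\
=\PO(\theta)W_g(\theta)\PO(\theta)+g^2Q^{(\theta)}(z)-M(z),
\end{multline*}
where $M(z):=\PO(\theta)W_g(\theta)\PObar(\theta)[\PObar(\theta)(H_g(\theta)-z)\PObar(\theta)]^{-1}\PObar(\theta)W_g(\theta)\PO(\theta)$. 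Thus it suffices to bound $\PO(\theta)W_g(\theta)\PO(\theta)$ and $g^2Q^{(\theta)}(z)-M(z)$ separately by $Cg^{2+\epsilon}$, uniformly for $z\in\cA$; the ``moreover'' claim is then contained in the first bound.

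Next I would estimate $\PO(\theta)W_g(\theta)\PO(\theta)$ by decomposing $W_g(\theta)$ according to how many photons each term creates or annihilates, noting that the pieces linear in $A_\kappa$ (the $A_\kappa\cdot p$ and $\sigma\cdot B$ terms) carry a coupling $g$ while the $:\!A_\kappa^2\!:$ piece carries $g^2$. Between the two factors $\PO(\theta)=\Pel(\theta)\otimes\chiH$ every such term is forced onto photon momenta $|k|\lesssim\rho_0$ (a created photon must lie in $\ran\chiH$, an annihilated one sat in a state of photon energy $\le\rho_0$), so that the $k$-integrals of the bounded, rapidly decaying form factors in \eqref{Eq:CoupDef1}--\eqref{Eq:CoupDef2}, together with $\|H_f^{1/2}\chiH\|\le\rho_0^{1/2}$, contribute positive powers of $\rho_0=g^{2-2\epsilon}$. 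This gives $\|\PO(\theta)W_g(\theta)\PO(\theta)\|\le C\,(g\rho_0+g^2\rho_0^{1/2}+\cdots)$, which is $O(g^{2+\epsilon})$ precisely because $\epsilon<1/3$.

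For $g^2Q^{(\theta)}(z)-M(z)$ the plan is to expand the middle resolvent of $M(z)$ in the symmetrized Neumann series around $H_0(\theta)$, exactly as in the proof of Lemma \ref{Lm:NumRanContr}: by Corollary \ref{Cor:BachBound1} each resolvent factor $|B_\theta(\rho_0)|^{1/2}\PObar(\theta)[\PObar(\theta)(H_0(\theta)-z)\PObar(\theta)]^{-1}\PObar(\theta)|B_{\bar\theta}(\rho_0)|^{1/2}$ is $O(\vartheta^{-1})$, by Lemma \ref{Lm:BachBound1} each interaction factor $|B_{\bar\theta}(\rho_0)|^{-1/2}W_g(\theta)|B_\theta(\rho_0)|^{-1/2}$ is $O(g(1+\rho_0^{-1/2}))=O(g^{\epsilon})$, and $\|\PO(\theta)|B_{\bar\theta}(\rho_0)|^{1/2}\|=O(\rho_0^{1/2})$. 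Counting powers, the $n$-th term of the expansion is $O(\rho_0\,g^{(n+2)\epsilon})=O(g^{2+n\epsilon})$ (with a constant depending on $\vartheta$), so already the terms with $n\ge1$ are $O(g^{2+\epsilon})$, and $M(z)$ equals its $n=0$ term $\PO(\theta)W_g(\theta)\PObar(\theta)[\PObar(\theta)(H_0(\theta)-z)\PObar(\theta)]^{-1}\PObar(\theta)W_g(\theta)\PO(\theta)$ up to such an error. In this $n=0$ term I would insert the photon-number decomposition of $W_g(\theta)$ and Wick-order the photon operators: the unique contraction that does not require a photon already present in $\ran\PO(\theta)$ comes from creating a photon of momentum $k$ with the linear-in-$A_\kappa$ part, commuting it through $H_0(\theta)$ (which replaces $H_f$ by $H_f+|k|$ and turns $\PObar(\theta)$ into $\PObar(\theta)(|k|)$ --- this is exactly why those objects occur in the definition of $Q^{(\theta)}(z)$), and re-annihilating it; this contraction equals precisely $g^2Q^{(\theta)}(z)$ and cancels it. Every remaining contribution involves at least one photon operator acting directly between the low-energy cutoffs in $\PO(\theta)$ without being paired off (the normal-ordered part, which needs one of the $\le\rho_0$-energy photons in $\ran\PO(\theta)$; the $\pm2$-photon parts; the cross and quadratic pieces carrying an extra factor of $g$ from $:\!A_\kappa^2\!:$), which supplies an additional power of $\rho_0^{1/2}$ or a further momentum restriction, and hence is $O(g^{2+\epsilon})$ for $\epsilon<1/3$.

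The main obstacle is this last bookkeeping: sorting out which pieces of $W_g(\theta)$ propagate through the resolvent, matching the single surviving Wick contraction with $g^2Q^{(\theta)}(z)$ term by term, and verifying that every remaining contribution is genuinely $O(g^{2+\epsilon})$, uniformly in $z\in\cA$. This is precisely the content of the estimates on $Rem_0$ through $Rem_3$ in \cite[Lemma 3.16]{Bachetal1999S}, and the threshold $\epsilon<1/3$ is exactly what is needed to absorb the $\rho_0=g^{2-2\epsilon}$ powers into $g^{2+\epsilon}$.
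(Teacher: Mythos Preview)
Your proposal is correct and follows essentially the same approach the paper describes: the paper does not give its own proof of this lemma but refers to \cite[Lemma 3.16]{Bachetal1999S}, summarizing the argument as ``a Neumann series expansion, estimates similar to Lemma \ref{Lm:BachBound1}, and the pull-through formula'' --- precisely the three ingredients you use. Your power-counting in $g$ and $\rho_0$ is accurate (in particular $g\rho_0^{-1/2}=g^{\epsilon}$ and the $n$-th Neumann term is $O(g^{2+n\epsilon})$), and your identification of the single surviving Wick contraction with $g^2Q^{(\theta)}(z)$ via the pull-through formula is exactly the mechanism behind the $Rem_0$--$Rem_3$ estimates you cite.
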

                The lengthy and technical proof of Lemma \ref{Lm:QAppr} is based on a Neumann series expansion, estimates similar to Lemma
                \ref{Lm:BachBound1}, and the pull-through formula.

                For $z$ sufficiently close to $E_j$,
                $Q^{(\theta)}(z)$ can be approximated by $\tilde Z(\alpha,\theta)$
                (see \cite[Lemma 3.16, Estimates on $Rem_4$ and
                $Rem_5$]{Bachetal1999S}).
                \begin{lemmaA} \label{Lm:ZAppr}
                Let $0<\epsilon<1/3$ and $0<\vartheta<\theta_0$.
                    Then there is a constant $C\geq 0$ such that for all
                    $g>0$ sufficiently small  with
                    $\rho_0<(\delta/3)\ sin\vartheta$, and for all
                    $z\in D(E_j, \rho_0/2)$
                    $$g^2\|Q^{(\theta)}(z)- \tilde Z(\alpha, \theta)\|\leq Cg^{2+\epsilon}.$$
                \end{lemmaA}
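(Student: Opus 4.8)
The plan is to follow \cite[Lemma~3.16, estimates on $Rem_4$ and $Rem_5$]{Bachetal1999S}. Since $Q^{(\theta)}(z)$ is already the leading contribution to the Feshbach operator, only two harmless modifications separate it from $\tilde Z(\alpha,\theta)$: replacing the spectral parameter $z$ by $E_j$ in both resolvents, and deleting the low-momentum cutoff $\chi_{|k|\geq\rho_0}$ from the diagonal part. First I would pass from $Q^{(\theta)}(z)$, regarded on $\ran\PO(\theta)$, to $Q^{(\theta)}_0(z)$ tensored with the identity on $\ran\chiH$: this amounts to dropping $e^{-i\vartheta}H_f$ from the denominator $\Hel(\theta)+e^{-i\vartheta}(H_f+|k|)-z$ in the definition of $Q^{(\theta)}(z)$, and since $\|H_f\chiH\|\leq\rho_0$ it costs, after integration in $k$ with the resolvent bounds below, at most $\cO(\rho_0|\log\rho_0|)$. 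Every further error is absorbed by the same quantity, because $\rho_0=g^{2-2\epsilon}$ and $\epsilon<1/3<2/3$ give $g^2\rho_0|\log\rho_0|=\cO(g^{4-2\epsilon}|\log g|)=\cO(g^{2+\epsilon})$; so it remains to compare $Q^{(\theta)}_0(z)$, whose explicit two-term form is \eqref{Eq:Q0Def}, with $\tilde Z(\alpha,\theta)$.

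Next I would substitute $z\rightsquigarrow E_j$ in the two resolvents of \eqref{Eq:Q0Def}, using $|z-E_j|<\rho_0/2$ and two elementary bounds. In the $\Pelbar(\theta)$-summand, $E_j$ is isolated in $\sigma(\Hel)$, so by the geometric argument behind Lemma~\ref{Lm:BachBound2} — the rotated spectrum of $\Hel(\theta)$ near $E_j$ keeps positive distance from $E_j$, and adding $e^{-i\vartheta}|k|$ with $\Re(e^{-i\vartheta}|k|)\geq0$ cannot cancel this — the resolvent $(\Hel(\theta)-z'+e^{-i\vartheta}|k|)^{-1}\Pelbar(\theta)$ is bounded uniformly for $z'\in D(E_j,\rho_0/2)$ and all $|k|\geq0$. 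In the $\Pel(\theta)$-summand, for $|k|\geq\rho_0$ one has $|E_j+e^{-i\vartheta}|k|-z|\geq\cos\vartheta|k|-\rho_0/2\geq\tfrac12|k|$, exactly as in the proof of Lemma~\ref{Lm:QBound}. Together with the pointwise estimate $\|\Pel(\theta)\G{0}{1}{\theta}{k}\Pel(\theta)\|+\|\Pel(\theta)\G{0}{1}{\theta}{k}\Pelbar(\theta)\|\leq C|k|^{-1/2}|\kappa(e^{-i\vartheta}|k|)|(1+\alpha|k|)$ — which follows from \eqref{Eq:CoupDef}, \eqref{Eq:CoupDef1}, \eqref{Eq:CoupDef2} since $\Pel(\theta)p_j\Pel(\theta)$ is bounded and the exponential decay of the electronic eigenfunctions, combined with the cone-decay of $\kappa$, controls the factors $e^{i\alpha k\cdot x_j}$ — the $k$-integral of the resolvent difference is $\cO(\rho_0)$ for the off-diagonal term and $\cO(\rho_0|\log\rho_0|)$ for the diagonal term. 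Removing $\chi_{|k|\geq\rho_0}$ from the diagonal summand of $Q^{(\theta)}_0(E_j)$ likewise contributes only $\int_{|k|<\rho_0}|k|^{-1}\|\Pel(\theta)\G{0}{1}{\theta}{k}\Pel(\theta)\G{1}{0}{\theta}{k}\Pel(\theta)\|\,dk=\cO(\rho_0)$.

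It then remains to identify the resulting cutoff-free, $z=E_j$ expression with $\tilde Z(\alpha,\theta)=\cU_{el}(\theta)\tilde Z(\alpha)\cU_{el}(\theta)^{-1}$ of \eqref{Eq:ZDef}, \eqref{Eq:ZDef1}, \eqref{Eq:ZDef2}. Writing the full dilation as $\cU(\theta)=\cU_{el}(\theta)\,\cU_f(\theta)$, the photonic factor $\cU_f(\theta)$, acting on the momentum-integral representation of the purely electronic operator $\tilde Z(\alpha)$, is precisely the change of variables $k\mapsto e^{-\theta}k$, i.e.\ a deformation of the $k$-contour into the cone of analyticity of $\kappa$. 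Under it, the $\cU_{el}(\theta)$-conjugated couplings $\G{0}{1}{0}{k},\G{1}{0}{0}{k}$ turn into $\G{0}{1}{\theta}{k},\G{1}{0}{\theta}{k}$, the factor $|k|^{-1}$ in \eqref{Eq:ZDef1} turns into $(e^{-i\vartheta}|k|)^{-1}$, and $[\Pelbar(\theta)\Hel(\theta)-E_j+|k|-i\epsilon]^{-1}$ turns into $[\Pelbar(\theta)\Hel(\theta)-E_j+e^{-i\vartheta}|k|]^{-1}$; the regularization $-i\epsilon$ is now superfluous since $\Im(e^{-i\vartheta}|k|)=-\sin\vartheta|k|\leq0$ and $\Hel(\theta)-E_j$ is already boundedly invertible on $\ran\Pelbar(\theta)$, so the limit $\epsilon\downarrow0$ in \eqref{Eq:ZDef} is the value at $\epsilon=0$ by dominated convergence. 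Summing the three $\cO(\rho_0|\log\rho_0|)$ contributions gives the claim, with room to spare. I expect this last step to be the delicate one: justifying the contour deformation — hence the \emph{exact} equality of $\tilde Z(\alpha,\theta)$ with the cutoff-free $z=E_j$ limit of $Q^{(\theta)}_0$, together with the vanishing of $i\epsilon$ — requires controlling the analyticity domains and the homogeneity in $k$ of every factor in the integrand (analyticity of $\kappa(|k|)$ and of $|k|^{-1/2}$ away from the complex light cone, degree-zero homogeneity of $\epsilon_\mu$, $k$-independence of $\Pel(\theta)p_j\Pel(\theta)$, and the rapid decay that keeps the entire factors $e^{i\alpha k\cdot x_j}$ integrable after the rotation), uniformly in $\vartheta$ and in $z\in D(E_j,\rho_0/2)$.
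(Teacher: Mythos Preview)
The paper does not give its own proof of this lemma: it is quoted verbatim from \cite[Lemma~3.16, $Rem_4$ and $Rem_5$]{Bachetal1999S}, and the only comment the paper adds is the one-line remark that ``the proof requires some additional estimates to eliminate the $z$-dependence of $Q^{(\theta)}(z)$''. Your proposal is therefore not to be compared with a proof in the present paper but with the original argument in \cite{Bachetal1999S}, and in that respect your outline is faithful to it: the three reductions --- dropping $e^{-i\vartheta}H_f$ on $\ran\chiH$, replacing $z$ by $E_j$ in both resolvents, and removing the infrared cutoff $\chi_{|k|\geq\rho_0}$ --- are exactly the content of $Rem_4$ and $Rem_5$, and your error bookkeeping $\cO(\rho_0|\log\rho_0|)=\cO(g^{2-2\epsilon}|\log g|)\subset\cO(g^{\epsilon})$ is the correct one for $\epsilon<2/3$.

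Two minor points. First, your bound $|E_j+e^{-i\vartheta}|k|-z|\geq\cos\vartheta\,|k|-\rho_0/2$ for the diagonal resolvent is not what the proof of Lemma~\ref{Lm:QBound} gives; the clean estimate here is simply $|E_j+e^{-i\vartheta}|k|-z|\geq|k|-|E_j-z|>|k|-\rho_0/2\geq|k|/2$ for $|k|\geq\rho_0$, which yields the same conclusion. Second, you are right that the identification of the cutoff-free $z=E_j$ expression with $\tilde Z(\alpha,\theta)$ via the radial contour deformation $r\mapsto e^{-\theta}r$ is the one step that genuinely requires care (analyticity of $\kappa$, homogeneity of $\varepsilon_\mu$, exponential decay of the electronic eigenfunctions to absorb $e^{i\alpha k\cdot x_j}$); this is carried out in \cite{Bachetal1998Q,Bachetal1999S} and is the reason the paper here merely cites the result rather than reproducing the argument.
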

                The proof requires some additional estimates to
                eliminate the z-dependence of $Q^{(\theta)}(z)$.
                 However, we not see that Lemma \ref{Lm:ZAppr} holds for all $z\in
                \cA$, which seems to be used in \cite{Bachetal1999S}.

                Lemma \ref{Lm:ZAppr} and Equation
                \eqref{Eq:ZAlphaZero}
                imply
                \begin{corA}\label{Cor:ZAppr}
                    Under the assumptions of Lemma \ref{Lm:ZAppr}
                    $$g^2\|Q^{(\theta)}(z)- Z(\theta)\|\leq Cg^{2+\epsilon}.$$
                \end{corA}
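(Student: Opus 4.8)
The plan is simply to combine Lemma \ref{Lm:ZAppr} with the power-series estimate \eqref{Eq:ZAlphaZero}, using the triangle inequality in operator norm. Concretely, for $z\in D(E_j,\rho_0/2)$ I would write
$$g^2\|Q^{(\theta)}(z)-Z(\theta)\|\leq g^2\|Q^{(\theta)}(z)-\tilde Z(\alpha,\theta)\|+g^2\|\tilde Z(\alpha,\theta)-Z(\theta)\|.$$
The first term on the right is bounded by $C g^{2+\epsilon}$ by Lemma \ref{Lm:ZAppr}, which applies precisely on the disc $D(E_j,\rho_0/2)$ on which we are working. The second term no longer depends on $z$ and is bounded by $C g^{2+2/3}$ by \eqref{Eq:ZAlphaZero}.

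It then remains only to absorb the exponent $2+2/3$ into $2+\epsilon$: since $0<\epsilon<1/3<2/3$ and $g$ is small (in particular $0<g\leq 1$), we have $g^{2+2/3}\leq g^{2+\epsilon}$, so the second term is also $\cO(g^{2+\epsilon})$. Adding the two estimates and relabelling the constant gives the claim. I do not expect any real obstacle at this stage: all the substantive work has already been carried out, namely the elimination of the $z$-dependence of $Q^{(\theta)}(z)$ on the small disc (Lemma \ref{Lm:ZAppr}) and the expansion of the coupling functions \eqref{Eq:CoupDef}--\eqref{Eq:CoupDef2} in powers of $\alpha=g^{2/3}$, which rests on the dilation-analyticity and exponential decay of the eigenfunctions of $\Hel$ and yields the bound \eqref{Eq:ZAlphaZero}.
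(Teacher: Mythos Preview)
Your proposal is correct and follows exactly the route indicated in the paper: the corollary is obtained by combining Lemma \ref{Lm:ZAppr} with the estimate \eqref{Eq:ZAlphaZero} via the triangle inequality, and then absorbing $g^{2+2/3}$ into $g^{2+\epsilon}$ using $\epsilon<1/3$. There is nothing to add.
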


                \begin{rem}\label{Rem:ZConj}
                    Note that in order to approximate the
                    Feshbach operator $\mathcal{F}_{\PO(\bar \theta)}(H_g(\bar
                    \theta)-z)$ for $\theta=i\vartheta$ with $\vartheta>0$, the $-i\epsilon$ in definition
                    \eqref{Eq:ZDef} has to be replaced by $+i\epsilon$. In
                    particular, when considering the spectral analysis
                    of this operator, the localization of the numerical
                    range and of the spectrum have to be reflected about
                    the real axis.
                \end{rem}
            \section{The Hydrogen Atom}
                In this section we discuss the applicability of the
                presented method to the hydrogen atom. In
                particular, we show that $\Im Z$ is strictly
                positive unless $j=1$. For compatibility with
                physics literature, we number the eigenvalues of the
                hydrogen atom according to the principal quantum
                number $n=i+1$. We denote the corresponding
                eigenvalues by $\mathfrak{E}_n$, i.e.,
                $\mathfrak{E}_n=E_i$ for all $i\geq 0$. We will
                ignore the (trivial) spin dependence of $Z=\tilde Z(0,0)$ in this appendix.

                \subsection{The Hydrogen Eigenfunctions}
                    We define the associated Laguerre polynomials
                    (see \cite[Formula (3.5)]{BetheSalpeter1957Q})
                    for $\lambda,\mu\in \N_0$ with $0\leq \mu\leq \lambda$
                    by
                    $$L^\mu_\lambda(r):=\left(\frac{d}{dr}\right)^\mu\left( e^r\left(\frac{d}{dr}\right)^\lambda\left( e^{-r} r^\lambda\right)
                    \right)$$
                    and set (see \cite[Formula (3.16)]{BetheSalpeter1957Q})
                    \begin{equation}\label{AppB:Eq:RadFunc}
                    R_{n,l}(r):=-\frac{1}{\sqrt{8}}\frac{(n-l-1)!^{1/2}}{(n+l)!^{3/2}(2n)^{1/2}}(2/n)^{3/2}e^{-r/(2n)}\left(
                    \frac{r}{n}\right)^l L_{n+l}^{2l+1}(r/n).
                    \end{equation}
                    Note that the Hamiltonian in
                    \cite{BetheSalpeter1957Q} has an additional
                    factor of $1/2$ in front of the Laplacian, so
                    that the radial functions and certain other
                    quantities have to be adapted accordingly.
                    We would like to warn the reader that there are
                    different conventions for the indices of the
                    associated Laguerre functions.

                    For $n\in \N$ and $l,m\in \Z$ with  $0\leq l\leq n-1$ and $-l\leq m\leq l$ the normalized eigenfunctions to the eigenvalue $\mathfrak{E}_n$ are
                    \begin{equation}\label{AppB:EFDef}u_{n,l,m}(r,\theta,\phi):=R_{n,l}(r)Y_{l,m}(\theta,\phi),\end{equation}
                    where
                    the $Y_{l,m}$ are spherical harmonics (see \cite[Section 1]{BetheSalpeter1957Q})  and
                     we introduced polar coordinates by
                    \begin{eqnarray*}
                        x&=&r\sin\theta \cos\phi\\
                        y&=&r\sin\theta \sin\phi\\
                        z&=&r\cos\theta
                    \end{eqnarray*}
                    with $0\leq \theta\leq \pi$ and $0\leq \phi\leq
                    2\pi$. Note that in this appendix $x$, $y$, and $z$
                    denote the cartesian coordinates of the
                    electron, contrary to the main part of the
                    paper, where $x$ and $z$ have different
                    meanings.
                    Moreover, note that
                    the eigenvalues $\mathfrak{E}_n$ are
                    $n^2$-fold degenerate.
                \subsection{Selection Rules for Dipole Transitions}
                    In this subsection, we give some important results from \cite{BetheSalpeter1957Q}. We define
                    \begin{equation}\label{AppB:Eq:RDef}
                    R_{n,l}^{n',l'}:=\int_0^\infty dr r^3
                    R_{n',l'}(r) R_{n,l}(r).
                    \end{equation}
                    These integrals have been evaluated by Gordon
                    \cite{Gordon1929} (see also \cite[Section
                    63]{BetheSalpeter1957Q}).
                    Below, we need  (see \cite[Formula (63.4)]{BetheSalpeter1957Q})
                    \begin{equation}\label{AppB:Eq:REval1}
                        |R_{2,1}^{n,0}|=2\cdot\sqrt{\frac{2^{15}n^9(n-2)^{2n-6}}{3(n+2)^{2n+6}}}.
                    \end{equation}
                    For the dipole moments $ (u_{n',l',m'}, z u_{n,l,m})$
                    one finds (see \cite[Formula
                    (60.11)]{BetheSalpeter1957Q})  for all $n,n'\in \N_0$ that
                    \begin{equation}
                        \label{AppB:Eq:Z3}(u_{n',l',m'}, z u_{n,l,m})=0\quad \text{unless $l'=l\pm1$ and $m'=m$}.
                    \end{equation}
                    Moreover, we will need the relation
                    \begin{equation}
                        \label{AppB:Eq:Z2}(u_{n', 0,
                        0}, z u_{2,1,0})=\sqrt{\frac{1}{3}}R_{2,1}^{n',
                        0}.
                    \end{equation}
                    The selection rules given in \cite[Formula
                    (60.11)]{BetheSalpeter1957Q} imply immediately
                    \begin{equation}
                        \label{AppB:Eq:XY5}(u_{n',l',m'}, x u_{n,l,m})=(u_{n',l',m'}, y u_{n,l,m})=0
                    \end{equation}
                    unless $l'=l\pm 1$ and $m'=m\pm 1$

                \subsection{The Imaginary Part of $Z$}
                    In this subsection, we show that the method
                    presented in this paper applies to the hydrogen
                    atom, except for the case $n=2$.
                    \begin{theoremA}
                        Fix $n\in \N$ and consider
                        \begin{multline*}
                            \Im Z=\frac{1}{6\pi}
                            \sum_{i=0}^{j-1}
                            (E_j-E_i)^3 \kappa(E_j-E_i)^2\\\times
                            \left[\Pel x  P_{el,i} x
                            \Pel+\Pel y  P_{el,i} y
                            \Pel+\Pel z  P_{el,i} z
                            \Pel\right]
                        \end{multline*}
                         for $j=n-1$ as in equation \eqref{Eq:ImZX}.
                         Then for all $l,m,l',m'\in \N_0$ with $0\leq l\leq
                        n-1$,  $-l\leq m\leq l$, $0\leq l'\leq
                        n-1$, and  $-l'\leq m'\leq l'$
                        $$(u_{n,l',m'}, \Im Z u_{n,l,m})=0$$
                        unless $l=l'$ and  $m=m'$,
                        and for all $l,m\in \N_0$ with $0\leq l\leq
                        n-1$,  $-l\leq m\leq l$
                        $$(u_{n,l,m}, \Im Z u_{n,l,m})>0$$
                        unless $n=2$. In particular, $\Im Z$ is
                        positive, unless $n=2$.
                    \end{theoremA}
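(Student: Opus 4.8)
The plan is as follows. Recall $j=n-1\ge 1$, and for $0\le i\le j-1$ put $c_i:=\frac{1}{6\pi}(E_j-E_i)^3\kappa(E_j-E_i)^2$. Since the Coulomb eigenvalues $\mathfrak E_m$ are strictly increasing in $m$ and $E_i=\mathfrak E_{i+1}$ with $i+1<n$, we have $E_i<E_j$, and since $\kappa>0$ on $[0,\infty)$ each $c_i>0$. Writing $x=x_1$ for the electron's position, the displayed formula reads $\Im Z=\sum_{i=0}^{j-1}c_i M_i$ on $\ran\Pel$ with $M_i:=\sum_{\alpha=1}^{3}\Pel\, x_\alpha\, P_{el,i}\, x_\alpha\,\Pel$. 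For $\psi\in\ran\Pel$ one has $\langle\psi,M_i\psi\rangle=\sum_\alpha\|P_{el,i}x_\alpha\psi\|^2\ge 0$, so $\Im Z\ge 0$, and $\langle\psi,\Im Z\psi\rangle=0$ exactly when $P_{el,i}x_\alpha\psi=0$ for all $i\le j-1$ and all $\alpha$.

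I would first dispose of the vanishing statement by rotation symmetry. Since $\Hel$ commutes with the rotation group $\mathrm{SO}(3)$ on $L^2(\R^3)$, so does each spectral projection $P_{el,i}$, and $\sum_\alpha x_\alpha(\,\cdot\,)x_\alpha$ is invariant under $x_\alpha\mapsto\sum_\beta R_{\alpha\beta}x_\beta$; hence every $M_i$, and therefore $\Im Z$, commutes with the $\mathrm{SO}(3)$-action on $\ran\Pel=\ker(\Hel-\mathfrak E_n)$. By \eqref{AppB:EFDef} this space decomposes as $\bigoplus_{l=0}^{n-1}\mathcal H_{n,l}$, $\mathcal H_{n,l}:=\spn\{u_{n,l,m}:|m|\le l\}$, which are irreducible and pairwise non-isomorphic $\mathrm{SO}(3)$-modules. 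By Schur's lemma $\Im Z$ acts as a scalar $\gamma(n,l)\ge 0$ on each $\mathcal H_{n,l}$; in particular $\langle u_{n,l',m'},\Im Z\, u_{n,l,m}\rangle=0$ unless $l=l'$ and $m=m'$, and $\langle u_{n,l,m},\Im Z\, u_{n,l,m}\rangle=\gamma(n,l)$ independently of $m$. It then remains to decide for which $n$ all the $\gamma(n,l)$, $0\le l\le n-1$, are strictly positive.

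For $l\ge 1$ I would bound $\gamma(n,l)\ge c_{l-1}\,\|P_{el,l-1}\,z\,u_{n,l,0}\|^2\ge c_{l-1}\,|\langle u_{l,l-1,0},z\,u_{n,l,0}\rangle|^2$, which is admissible because $0\le l-1\le n-2=j-1$ and the level $l$ lies strictly below $n$; the vectors $u_{l,l-1,m'}$ are the only ones of angular momentum $l-1$ at that level. Splitting radial and angular parts, $\langle u_{l,l-1,0},z\,u_{n,l,0}\rangle=g\cdot R_{n,l}^{l,l-1}$ with $g\neq 0$ a Gaunt coefficient, so it suffices to show the radial integral $R_{n,l}^{l,l-1}=\int_0^\infty r^3 R_{l,l-1}(r)R_{n,l}(r)\,dr$ is nonzero. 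Using the nodeless $R_{l,l-1}(r)=c\,r^{l-1}e^{-r/(2l)}$ ($c\neq 0$, from \eqref{AppB:Eq:RadFunc} at $n=l$) and the explicit form \eqref{AppB:Eq:RadFunc} of $R_{n,l}$, the substitution $t=r/n$ turns this into a nonzero multiple of $\int_0^\infty t^{2l+2}e^{-\gamma t}L^{(2l+1)}_{n-l-1}(t)\,dt$ with $\gamma=\frac{n+l}{2l}>1$ (here $L^{(2l+1)}_{n-l-1}$ is the Laguerre polynomial of degree $n-l-1$ in the usual normalization). A straightforward evaluation—using the three-term recurrence for $L^{(\alpha)}_k$ together with $\int_0^\infty e^{-\gamma t}t^\alpha L^{(\alpha)}_k(t)\,dt=\frac{\Gamma(\alpha+k+1)}{k!}\frac{(\gamma-1)^k}{\gamma^{\alpha+k+1}}$—shows this integral equals a strictly positive quantity times $(\gamma-1)^{\max(n-l-2,0)}$, which is nonzero since $\gamma-1=\frac{n-l}{2l}>0$. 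Hence $\gamma(n,l)>0$ for every $l\ge 1$ and every $n\ge 2$.

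Finally, for $l=0$ the vector $x_\alpha u_{n,0,0}$ carries only angular momentum $1$, so $P_{el,i}x_\alpha u_{n,0,0}\neq 0$ is possible only if level $i+1$ contains a $p$-state, i.e.\ $i+1\ge 2$. For $n=2$ the only admissible index is $i=0$ with $\ran P_{el,0}=\C\,u_{1,0,0}$, which has no $p$-component, so $\gamma(2,0)=0$—the metastability of $2s$. For $n\ge 3$ take $i=1$: then $\gamma(n,0)\ge c_1|\langle u_{2,1,0},z\,u_{n,0,0}\rangle|^2=\tfrac{c_1}{3}|R_{2,1}^{n,0}|^2$ by \eqref{AppB:Eq:Z2}, which is nonzero because $|R_{2,1}^{n,0}|=2\sqrt{2^{15}n^9(n-2)^{2n-6}/(3(n+2)^{2n+6})}>0$ for $n\ge 3$ by \eqref{AppB:Eq:REval1}; hence $\gamma(n,0)>0$. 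Putting the cases together: for $n\ge 3$ all $\gamma(n,l)>0$ and $\Im Z$ is strictly positive, whereas for $n=2$ one has $\gamma(2,1)>0$ (the $l=1$ case above, or \eqref{AppB:Eq:REval1} with $n=1$) but $\gamma(2,0)=0$, so $\Im Z$ is merely positive semidefinite with kernel $\C\,u_{2,0,0}$. The only step that is not mere bookkeeping with the dipole selection rules \eqref{AppB:Eq:Z3}, \eqref{AppB:Eq:XY5} and with Schur's lemma is the nonvanishing of the radial integral $R_{n,l}^{l,l-1}$ for general $l\ge 1$: it is not covered by the matrix elements quoted in \eqref{AppB:Eq:REval1}, and genuinely requires the Laguerre computation sketched above.
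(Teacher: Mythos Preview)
Your argument is correct. The off-diagonal part (rotation invariance plus Schur's lemma) and the case $l=0$ (vanishing for $n=2$ by the selection rules, strict positivity for $n\ge3$ via $|R_{2,1}^{n,0}|>0$ from \eqref{AppB:Eq:REval1}--\eqref{AppB:Eq:Z2}) coincide with the paper's proof almost verbatim.

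The genuine difference is the case $l\ge1$. The paper does \emph{not} compute a radial integral. Instead it passes from $x_\upsilon$ to $p_\upsilon$ via $[x,\Hel]=2ip$, assumes for a contradiction that $P_{el,i}p_\upsilon u_{n,l,m}=0$ for every $i<j$, and uses the variational principle together with the identity $\sum_\upsilon (p_\upsilon u_{n,l,m},\Hel p_\upsilon u_{n,l,m})=E_j\sum_\upsilon\|p_\upsilon u_{n,l,m}\|^2$ (valid because $u_{n,l,m}(0)=0$ when $l>0$) to force $\Hel p_\upsilon u_{n,l,m}=E_j p_\upsilon u_{n,l,m}$; this contradicts $[\Hel,p_x]u_{n,l,m}=-i\,x r^{-3}u_{n,l,m}\neq0$. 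Your route is instead constructive: you exhibit the specific lower level with principal quantum number $l$ and show $R_{n,l}^{l,l-1}\neq0$ by a direct Laguerre computation. The paper's argument is shorter and computation-free, and does not need to single out any particular decay channel; your argument has the advantage of being constructive and of yielding, if desired, an explicit lower bound $\gamma(n,l)\ge c_{l-1}\,|g|^2\,|R_{n,l}^{l,l-1}|^2$. As you already flagged, the nonvanishing of $R_{n,l}^{l,l-1}$ for general $l\ge1$ is not among the quoted matrix elements and is the one place where your proof requires an honest additional calculation; the evaluation you sketch (reducing $\int_0^\infty t^{2l+2}e^{-\gamma t}L^{(2l+1)}_{n-l-1}(t)\,dt$ via $L^{(\alpha)}_k=L^{(\alpha+1)}_k-L^{(\alpha+1)}_{k-1}$ to two instances of the generating-function integral) indeed gives a strictly positive result for $\gamma=(n+l)/(2l)>1$.
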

                    \begin{proof}
                    \emph{Off-diagonal matrix elements:}
                    Since $\Im Z$ is invariant under
                    rotations, it is diagonal in the basis
                    $\{u_{n,l,m}|\,
                    0\leq l\leq n-1,\, -l\leq m\leq
                    l\}$. This can also be verified using
                    the explicit formulas for the dipole matrix elements in \cite[Section
            63]{BetheSalpeter1957Q} . Note that the
                    matrices $\Pel x  P_{el,i} x
                    \Pel$, $\Pel y  P_{el,i} y
                    \Pel$, and $\Pel z  P_{el,i} z
                    \Pel$ are not diagonal separately. We would like
                    to mention that also the real part is diagonal
                    in the basis $\{u_{n,l,m}|\,
                    0\leq l\leq n-1,\, -l\leq m\leq
                    l\}$.

                    \noindent\emph{Diagonal matrix elements}:
                    Let us first remark that the matrix
                    element $$(u_{2,0,0}, [P_{el,1} x  P_{el,0} x
                        P_{el,1}+P_{el,1} y  P_{el,0}
                        y
                        P_{el,1}+P_{el,1} z  P_{el,0}
                        z
                        P_{el,1}]u_{2,0,0})$$
                        vanishes by the selection rules
                        \eqref{AppB:Eq:XY5} and
                        \eqref{AppB:Eq:Z3}.

                        Suppose now that $n\geq 3$.
                        We have to prove that there is an
                        $i<j=n-1$ such that for all $\phi\in
                        \ran \Pel$
                        $$\sum_{\upsilon=x,y,z}\|P_{el,i}p_\upsilon\phi\|^2>0.$$
                        Since
                        $\Im Z$ is diagonal in the basis $\{u_{n,m,l}|l=0\ldots n-1,\, m=-l,\ldots, l\}$, it suffices to show
                        $$\sum_{\upsilon=x,y,z}\|P_{el,i}p_\upsilon u_{n,l,m}\|^2>0$$
                        for all $0\leq l\leq n-1$ and
                        $-l\leq m\leq l$. For the case
                        $l=0$, $m=0$ it follows from
                        equations \eqref{AppB:Eq:Z2} and \eqref{AppB:Eq:REval1}
                        that the transition
                        $(n,0,0)\rightarrow (2,1,0)$ is an
                        allowed electric dipole transition,
                        since $z^{n,0,0}_{2,1,0}>0$.
                        Consequently $(u_{n,0,0}, \Im Z
                        u_{n,0,0})>0$.

                        Thus, it suffices to consider the case
                        $l>0$. The proof is by contradiction. Assume  that
                        $\sum_{\upsilon=x,y,z}\|P_{el,i}p_\upsilon u_{n,l,m}\|^2=0$
                        for all $i<j=n-1$ and some $l,m$.
                        This would imply that for
                        $\upsilon=x,y,z$
                        $$(p_\upsilon
                        u_{n,l,m}, \Hel p_\upsilon
                        u_{n,l,m})\geq E_j(p_\upsilon
                        u_{n,l,m},  p_\upsilon
                        u_{n,l,m}).$$
                            For $l>0$, it is easy to see by Equation
                         \eqref{AppB:Eq:RadFunc}
                         that $p_\upsilon u_{n,l,m}\in \dom(\Hel)$
                         and, using partial integration and the fact that
                         $u_{n,l,m}(0)=0$,  we see that
                         $$\sum_{\upsilon=x,y,z}(p_\upsilon
                         u_{n,l,m}, \Hel p_\upsilon
                         u_{n,l,m})=E_j\sum_{\upsilon=x,y,z}(p_\upsilon
                         u_{n,l,m},  p_\upsilon
                         u_{n,l,m}).$$
                            Thus, we conclude by the variational principle that
                        $$\Hel p_\upsilon u_{n,l,m}=E_j
                        p_\upsilon
                        u_{n,l,m}.$$
                        However,
                        $$E_j p_\upsilon
                        u_{n,l,m}=\Hel p_\upsilon u_{n,l,m}=E_j
                        p_\upsilon
                        u_{n,l,m}+[\Hel,p_\upsilon]u_{n,l,m}$$ for $\upsilon=x,y,z$
                        and
                        $$[\Hel,p_x]=-i\frac{x}{r^3}$$
                        with $r=\sqrt{x^2+y^2+z^2}$,
                        so that we arrive at a
                        contradiction.

                    \end{proof}
            \subsection{Numerical Illustration}
                In this subsection we give explicit numerical values
                for the matrix $\Im Z$ for the case $n=3$ setting the cutoff function $\kappa$ identically equal to one.
                Using Maple and the explicit form of the
                eigenfunctions in equation \eqref{AppB:EFDef}, we
                calculate the matrices $P_{el,0}xP_{el,2}$ and
                $P_{el,1}xP_{el,2}$ as well as the corresponding
                matrices for the coordinates $y$ and $z$, where
                $P_{el,0}$ is the projection onto the groundstate,
                $P_{el,1}$ the projection onto the eigenspace
                belonging to $\mathfrak{E}_2$, and $P_{el,2}$ the projection onto the eigenspace
                belonging to $\mathfrak{E}_3$. With these matrices,
                we calculate $\Im Z$ according to equation
                \eqref{Eq:ImZX}. The
                numerical values for other principal quantum numbers
                could be calculated in the same way.

                The matrix  $\Im Z$ (and also $Z$) is diagonal in the basis $\{u_{3,l,m}|\,
                    0\leq l\leq 2,\, -l\leq m\leq
                    l\}$. The diagonal elements depend only on $l$,
                    but not on $m$. We find $(u_{3, 0,
                0}, \Im Z u_{3, 0, 0})=\frac {192}{1953125}$, $(u_{3, 1,m}, \Im Z u_{3, 1,m})=\frac {738423}{250000000}$ for $-1\leq m\leq 1$,
                  and $(u_{3, 2,m}, \Im Z u_{3, 2,m})=\frac {49152}{48828125}$ for $-2\leq m\leq 2$.
                Let us remark that the eigenvalues of $2\cdot(2\alpha^5 m\mathfrak{c}^2/\hbar)\Im Z$ are
                precisely the inverse lifetimes $\tau_{n,l,m}^{-1}$ of the corresponding
                eigenstates of the hydrogen atom. The additional
                factor two is due to the fact that lifetimes are
                defined via survival probabilities and not via
                survival amplitudes.
                Inserting $\alpha=7.29735\cdot 10^{-3}$, $m=9.10939\cdot 10^{-31}kg$, $\mathfrak{c}=2.99792\cdot 10^{8}m/s$ and
                $\hbar=1.05457\cdot 10^{-34}Js$ we find
                $\tau_{3,0,0}=1.58303\cdot 10^{-7}s$,
                $\tau_{3,1,m}=5.26860\cdot 10^{-9}s$ for
                $-1\leq m\leq 1$, and $\tau_{3,2,m}=1.54593\cdot 10^{-8}s$ for
                $-2\leq m\leq 2$. Experimental values for these lifetimes are not very precise.  We
                quote a value of $\tau_{3,1,m}= (5.5  \pm 0.2) \times 10^{-9}s$ given in
                \cite{Chuppetal1968}.  \cite{BickelGoodman1966} find
                a value of $\tau_{3,1,m}= (5.58  \pm 0.13) \times
                10^{-9}s$  and  \cite{Ethertonetal1970} find $\tau_{3,1,m}= (5.41  \pm 0.18) \times
                10^{-9}s$. Notice that the experimental values  are in
                reasonable agreement with the calculated value.

        \end{appendix}
        \begin{ack}
            M.H. wishes to thank I.H. for the hospitality of the
            Mathematics Department of the University of Virginia and
            V. Bach for interesting conversations.
            He acknowledges support by the Deutsche
            Forschungsgemeinschaft (DFG), grant no. SI
            348/12-2. His
            stay at the University of Virginia was supported by a
            ``Doktorandenstipendium''
            from the German Academic Exchange Service
            (DAAD), which he gratefully acknowledges.
        \end{ack}



\end{document}